\newtheorem{thm}{Theorem}[section]
\newtheorem{defn}[thm]{Definition}
\newtheorem{cor}[thm]{Corollary}
\newtheorem{prop}[thm]{Proposition}
\newtheorem{example}[thm]{Example}
\DeclareMathOperator{\img}{img}
\begin{document}

\title{Towards Physics of Internal Observers: Exploring the Roles of External and Internal Observers.}
\author{Marcin Nowakowski}

 \email[Electronic address: ]{marcin.nowakowski@pg.edu.pl}
\affiliation{Faculty of Applied Physics and Mathematics, Gdańsk University of Technology, 80-952 Gdańsk, Poland}
\affiliation{National Quantum Information Center, 80‑309 Gdańsk, Poland}

\pacs{03.65.Ta, 03.65.Ud, 04.20.Cv}

\begin{abstract}
 In both quantum mechanics and relativity theory, the concept of the observer plays a critical role. However, there is no consensus on the definition of observer in these theories.
 Following Einstein's thought experiments, one could ask: \textit{What would it look like to sit inside a photon or to be a photon? And what type of observer could represent this more global perspective of the photon's interior?} 
 To address these questions, we introduce the concepts of \textbf{internal and external observers}  with a focus on their relationship in quantum theory and relativity theory.
 The internal observer, associated with the internal observables super-algebra, glues the  external interactions. 

 Drawing inspiration from the advancements in abstract algebraic topology, we propose mathematical representation of the internal observer. We also outline principles for ensuring the consistency of observers in terms of information theory. It becomes evident, through the analysis of the introduced \textbf{ hierarchy of observers}, that entanglement is a primitive of space-time causal relationships.

To further explore these ideas, we suggest studying the quantum cohomology groups of sheaves on simplicial complexes. The internal observer is linked with \textit{internal self-consistent loops}, as formalized through sheaf cohomology, and the stalks on the vertices of the superposed graphs are associated with the local quantum spaces of the external observer.

While external observers must abide by the relativistic causality linked with the no-signaling principle in quantum mechanics, the internal observer is inherently non-local and may be acausal. However, its consistency is maintained through the formulation of \textit{the self-consistency principle}.

 One of the goals of this paper is to construct the representation of the internal observer from the local external algebra of observables, which can be associated with external observers. Additionally, we demonstrate how the concepts of internal and external observers can be applied in the fields of quantum information theory, algebraic quantum field theory, and loop quantum gravity. The concept of internal observer seems to be fundamental for
further development of quantum gravity.

\end{abstract}

\maketitle

\section{Introduction}

The physics of observation refers to the study of how observations and measurements affect the physical systems being observed \cite{Measurement1, Measurement2, Measurement3, Measurement4}. In the context of quantum mechanics and relativity theory, this becomes a highly complex and nuanced field, as these theories offer different perspectives on the nature of reality and the relationship between the observer and the observed \cite{Obj1, Obj2, Obj3, Obj4, Measurement3}.

In quantum mechanics, the act of observation is seen as affecting the wave function of a quantum system, collapsing it from a superposition of states into a definite state. This is known as the observer effect and it raises fundamental questions about the nature of reality and the role of the observer in creating it. However, the observer's role is also controversial because there is no clear consensus on what constitutes an observer in quantum mechanics. Some argue that the observer must be a conscious entity, while others argue that any measurement apparatus can act as an observer \cite{Obs1, Obs2, Obs3, Obs4, Obs5}.

Relativity theory, on the other hand, highlights the observer-dependence of space and time. According to the theory of special relativity, the laws of physics are the same for all observers in uniform relative motion, and the speed of light is constant for all observers. This leads to a concept of relative observation, where the outcome of an experiment depends on the observer's state of motion.
An observer's measurements of space and time intervals are relative to their own frame of reference, which means that different observers can disagree on the sequence of events and the length of space and time intervals. This relative nature of observation is central to relativity theory and has profound implications for our understanding of the universe. This discussion is formulated in recent studies on the concept of quantum relativistic reference frames \cite{Zych2, Aharonov2, Aharonov3, Marletto4, Brukner} and on the quantum versions of the equivalence principle \cite{Zych1, Rosi, Marletto3, Brukner2, Marletto2, NowakowskiConsist}.

The intersection of quantum mechanics and relativity theory creates further complexities and challenges in the physics of observation. For example, the unification of quantum mechanics and general relativity is still an ongoing area of research and a major challenge in modern physics.

In conclusion, the physics of observation in the context of quantum mechanics and relativity theory highlights the observer-dependence of physical phenomena and the complex relationship between the observer and the observed.

These theories offer a radically different perspective on the nature of reality, in which the observer and the observed are inextricably linked, and where our understanding of the physical world is dependent on our methods of observation. As a result, the physics of observation remains an active area of research, with ongoing efforts to reconcile these theories and gain a deeper understanding of the underlying principles of quantum mechanics and relativity.

In the context of the discussion about observability and observers, the central question of this paper is how to represent an observer who is within an evolving physical system and how to represent observers from an external perspective. Using topological concepts, we aim to represent an observer associated with the interior of the system and those associated with the boundary of the system where measurements and interactions occur.

The meaning of realized and to be realized, connected with potentiality of physical phenomena, is inevitably connected with what was mentioned earlier.

We propose novel concepts of \textbf{internal and external observers} and aim to shed light on how to represent the internal perspective of a physical systems and the distinction between potentiality and realization in reality. This discussion is important for advancing our understanding of the fundamental principles of physics.

In this paper, we aim to establish a connection between the topology of the internal observer (similar to impossible figures) and the external observer. Our goal is to construct the representation of the internal observer from the local external algebras of observables, which can be associated with external observers and define principles ensuring consistency of these objects in purely information terms.

Inspired by the achievements of abstract algebraic topology, we present the candidates for mathematical representation of the internal and external observers. This paper is organized as follows:

Section \ref{secII} discusses the need for introduction of the internal observer concept in the context of quantum mechanics and relativity theory. 

Section \ref{secIII} introduces formal concepts of internal and external observers employing sheaf cohomology. In particular, it introduces a hierarchy of observers with which an algebra of observables can be associated at different levels of complexity. The principle of self-reconciliation of information for external observers, inspired by sheaf cohomology theory, is introduced to ensure consistency of information at the given level of the hierarchy of external observers. For the internal observer, we introduce the principle of self-consistency and show how the filtering mechanism leads to the lower level states from the state space of the internal observer. This discussion leads to the result that the higher level observers are more or equally non-local than the lower level observers. Finally, it is discussed that, in similarity to the hierarchy of observers, one can build a hierarchy of correlations up to super-entanglement of the internal observer.

In section \ref{secIV}, we discuss the concepts of internal and external observers in the context of loop quantum gravity, showing that external observers are directly linked with accessible sets of external observables on spin networks that bound the evolution of the system.

Section \ref{secV} considers the algebraic approach to quantum field theory (AQFT), which seems to be a natural choice in the context of the proposed hierarchy of observers, and shows how the internal observer can be represented in terms of the cohomology of infinite Lie algebras.
The discussion of the internal observer in a context of relativity theory, quantum mechanics, LQG and AQFT proves solid foundation behind this proposal.

In the appendix, we discuss an interesting proposal of the quantum principle of relativity \cite{Dragan} and how it can be translated into the language of external and internal observers. We also give a brief introduction to the consistency of CTCs (closed timelike curves), P-CTCs (postselected CTCs), and entangled histories, which motivates the introduction of the self-consistency principle for the internal observer. Finally, we present a technical overview of the theory of simplicial complexes and sheaf cohomology, and demonstrate that higher-order observers exhibit at least as much non-locality as their lower-level counterparts, if not more.

\section{The need for introduction of the internal Observer}\label{secII}

The subject of observation is central for the theory of general relativity and quantum mechanics. The revolution, brought to physics by Einstein, was built upon the special and general principle of relativity inspired by consideration of observation effects for different reference frames. On the other hand, quantum theory brought formalization of its central concepts in terms of quantum states, their evolution, and the measurement process where the observation act becomes a key aspect of reality.   

Einstein emphasizes \cite{AEinstein2} that the law of causality does not have the significance of a statement about the world of experience, except when \textit{observable facts} ultimately appear as causes and effects. These observable facts, as results of observation, account for the concept of events, which are the building blocks of space-time. The theory associates a set of variables $x_{\mu}$ with each point-event of space-time.

Einstein also states \cite{AEinstein2} that all verifications of space-time amount to the determination of space-time coincidences (understood as e.g. the interaction of scattering bodies). The introduction of a system of reference serves no other purpose than to facilitate the description of the totality of such coincidences. These considerations naturally lead to \textit{the general principle of covariance}, and space-time loses its physical objectivity as a structure \cite{AEinstein2}. Following Einstein's deep consideration of causality and the structure of space-time as a set of point-events with relations (formalized further by means of an introduced tensor metric $g_{\mu\nu}$), we arrive at the fundamental observation that space-time itself does not have any solid foundation without observers.

Quantum mechanics brings forth even more counter-intuitive phenomena related to the observation process, which is formalized in terms of algebras of observables $\mathcal{A}(O)$. A fundamental challenge in physics is understanding how to realize a physical process from a set of potential realities, which are formalized in quantum theory as quantum histories \cite{Griffiths, Cot1, MTSEH}. One can associate Feynman propagators with these quantum histories. 
They represent the probability amplitude for a particle to travel from one space-time point to another (Fig. \ref{EntPaths}). It is a mathematical expression that takes into account all the possible paths that the particle could take between the two points, in accordance with the principles of quantum mechanics.
However, the interpretation of these potential realities is still a big unknown in fundamental physics, and many secrets of this nature remain not fully understood. The controversies around an enigmatic phenomenon of the wave-function collapse are one manifestation of many not fully understood secrets of this nature. 

\begin{figure}[h]
\centerline{\includegraphics[width=10cm]{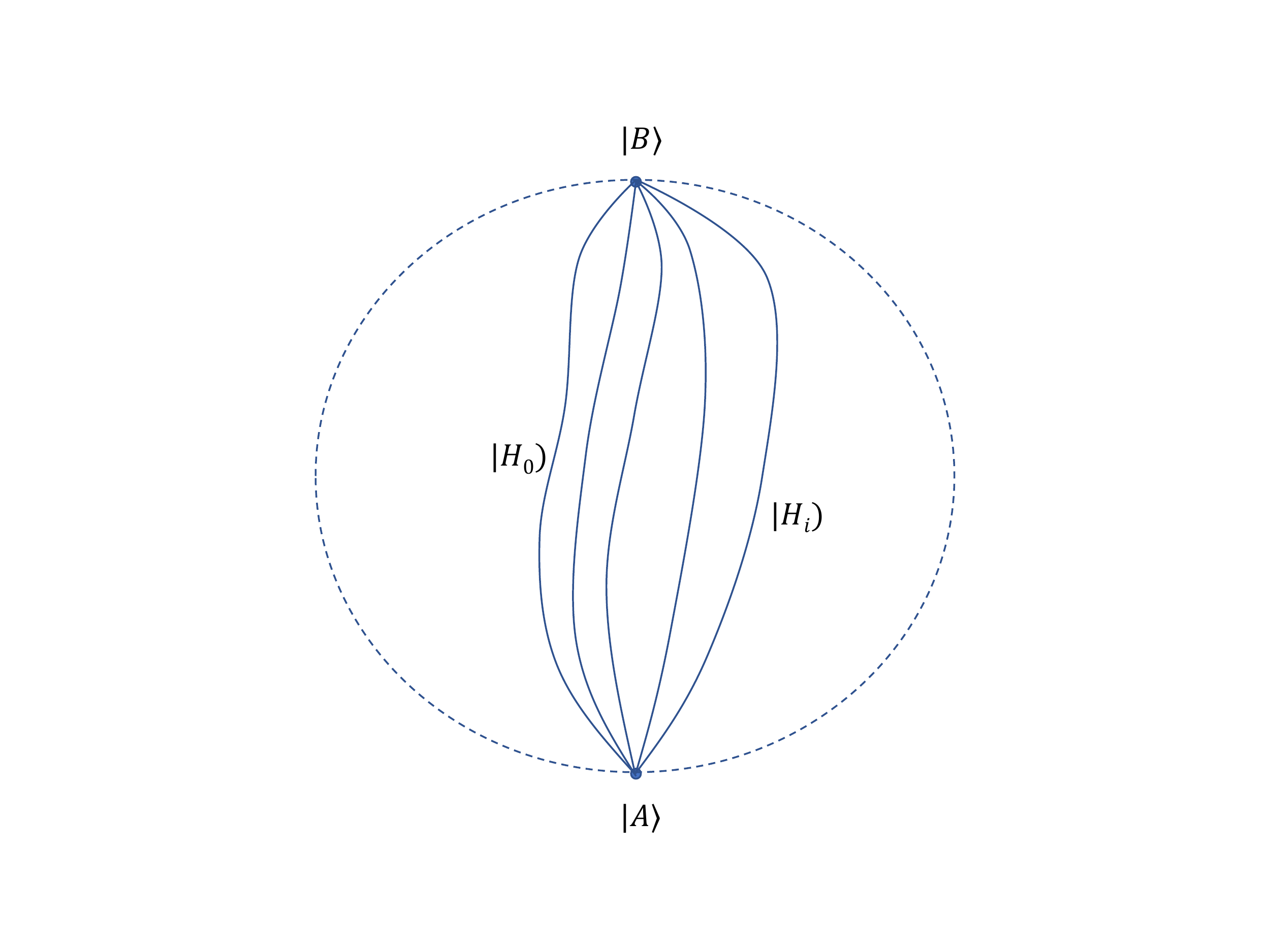}}
 \caption{
 Multiple paths connecting two space-time point-events $|A\rangle$ and $|B\rangle$ for a particle emitted at the point A in a state $|A\rangle$ and observed at B in a state $|B\rangle$. One can associate a quantum history $|H_i)$ and a probability amplitude $e^{i\phi_i}$ with each path, and the overall probability amplitude $\langle B | A\rangle \sim\sum_{i}e^{i\phi_i}$ with this quantum process.}
 \label{EntPaths}
\end{figure}

The problem of observing a physical process in quantum mechanics (QM) is related to its correlation with an external system, which is represented by a detector during measurement or another system during interaction. In some sense, the evolving physical object externalizes its state at the "moment" of observation. This external state is directly linked to the "realized" evolution path of a quantum system. After the measurement process, the state of a quantum system can be "collapsed," and the system can lose its so-called quantum coherence. What is not understood in the current state of physical theory, including beyond quantum theories, is the internal evolution of the system. It is as if the system could only exist through the measurement or interaction. There has been a long-standing debate, recalling the founders of quantum theory like Bohr \cite{Bohr}, about the physical existence of quantum objects between quantum jumps (e.g. electrons between discrete energy levels) or the existence \cite{Photon1, Photon2} of a photon in the arms of the Mach-Zehnder interferometer before it is measured or before it interferes with the surrounding field. 
Following Einstein's thought experiments, one could ask: \textit{What would it look like to sit inside a photon or to be a photon? And what type of observer could represent this more global perspective of the photon's interior?} This internal state cannot be observed directly by an external observer and its associated observables. Yet, it exists and is related to the inner reality of the photon.

In fact, the classical treatment of time is deeply intertwined with the Copenhagen interpretation of quantum mechanics, and thus with the conceptual foundations of quantum theory \cite{Obj1, Obj2, Obj3, Obj4}. All measurements of observables are made at certain instants of time, and probabilities are only assigned to such measurements \cite{Obj3, Obj4}.

In this context, we assume that space-time and other qualities of reality are constituted by the act of observation and observers. Yet, the concept of observers is still not fully understood.

To address these fundamental questions and consider the role of observers in the creation of reality, we propose to introduce two types of observers: \textbf{internal observers} and \textbf{external observers}, with further formalization in terms of algebraic structures associated with these objects. Intuitively, the internal observer should reflect the 'internal reality' or interior of the system $\mathcal{O}$ (i.e. $int\mathcal{O}$). The external observer should reflect what can be observed on the boundary of the system (i.e. $\partial\mathcal{O}$) by interaction with other systems. If a system does not interact, if it cannot be observed (measured), then it does not exist externally. Its external existence is defined by interaction and measurement which seems to be in agreement with how we understand quantum measurement process.

\begin{figure*}
\centering
\includegraphics[width=\textwidth]{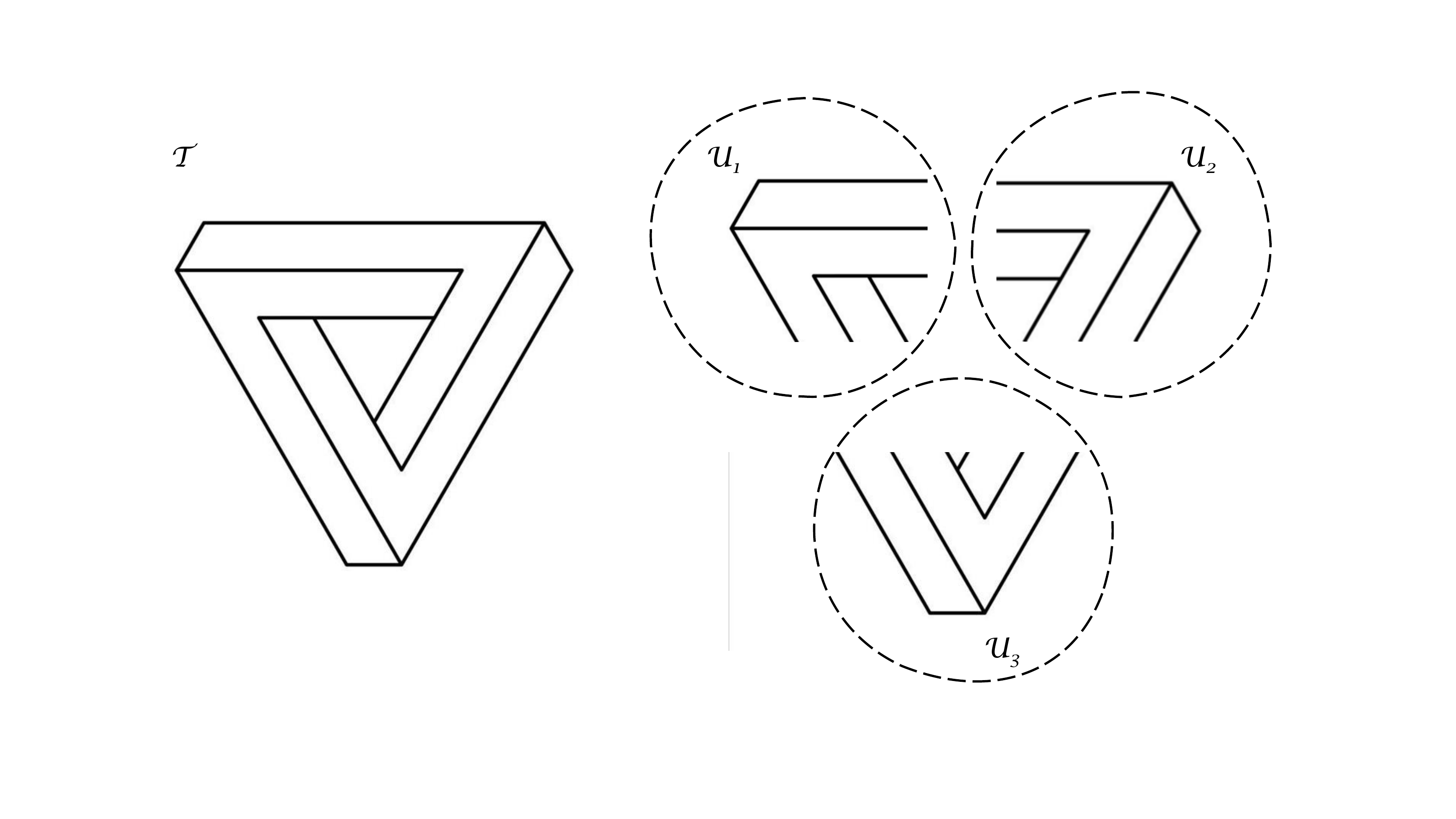}
 \caption{The Penrose tribar $\mathcal{T}$. The internal observer can be linked to a global impossible object that gives rise to local possible objects - external observers being still part of the overarching internal observer. $\mathcal{T}=\mathcal{U}_1\cup \mathcal{U}_2 \cup\mathcal{U}_3$ is the union of subsets $\mathcal{U}_1$, $\mathcal{U}_2$ and $\mathcal{U}_3$  which are topologically isomorphic to balls.}
 \label{Tribar}
\end{figure*}

Following Einstein's reasoning, which has been extended to local quantum observations \cite{Aharonov2, Zych1, Marletto2, Zych2, Rosi, Marletto3, Brukner}, a system of reference serves no other purpose than to facilitate the description of the totality of coincidences. The external observers operate in the causal world with consistent information about the observed facts. The assumption about the consistency of observed facts and causality forms the basis for further formulation of \textbf{the principle of self-reconciliation of information} for external observers. A variant of this principle is formulated in terms of the principle of general covariance, which should also hold for all quantum theories. In the model of reality presented in this paper, we claim that events, as building blocks of space-time, exist only in observers. Thus, the 'arena of space-time' external to observers is not well founded.

We assume that \textit{what exists is observable and what is observable can exist} or be realized. Thus, operationally we will consider further algebras of observables associated with 'local realities' of observers. The missing element is the internal 'global reality' of the evolving object that will be modelled by the concepts of an internal observer, its self-consistent internal space and the hierarchy of observers.

An internal observer is related to totality of possible observations and stores ultimately connections among the local 'external' realities.

We propose that \textit{time is not a fundamental concept for an internal observer}, since causality is directly related to point-events, which are elements of reality of the external observers. However, the internal observer ensures consistency of the overall evolution that will be formalized as \textbf{the principle of self-consistency of an internal observer}. The proposed principle of self-consistency is inspired by the consistency conditions for the CTCs (Closed Time-Like Curves) \cite{Deutsch}, P-CTCs \cite{Lloyd} and quantum entangled histories \cite{Cot1, MTSEH, NowakowskiAIP} (see \ref{Appendix2}). \textit{Those can be viewed as lower level variants in the hierarchy of observers of this principle for the internal observer}. These models provide a framework for understanding the behavior of quantum systems under different circumstances and ensuring that the models remain self-consistent and free of contradictions. 

We also argue that the concept of an internal observer precedes the concepts of time and space, which are generated by external observers for events and their causal relations. In this framework, causality holds for external observers but is not necessary for internal observers.

To support our statement about the lack of inherent causality for an internal observer, we should notice that an
internal observer without interaction with external observers does not generate any information relations among point-events, and therefore lacks any causal relations. This means that it is causally disconnected from external observations.

To provide a simple initial example that illustrates this discussion, let us again refer to Fig. \ref{EntPaths}.  An external observer is associated with the boundary of the evolving system (depicted as a dashed circle) where the preparation of the system in a state $|A\rangle$ and the observation of its state $|B\rangle$ happen. The internal observer is associated with the interior of the system (for simplicity) and this observer can be associated with the bundle of quantum histories $\{|H_i)\}$. Formal construction of quantum Hermitian observables is possible for the external observer on the boundary, as well as for an internal observable that is associated to the quantum history bundle.

\begin{figure*}
\centering
\includegraphics[width=\textwidth]{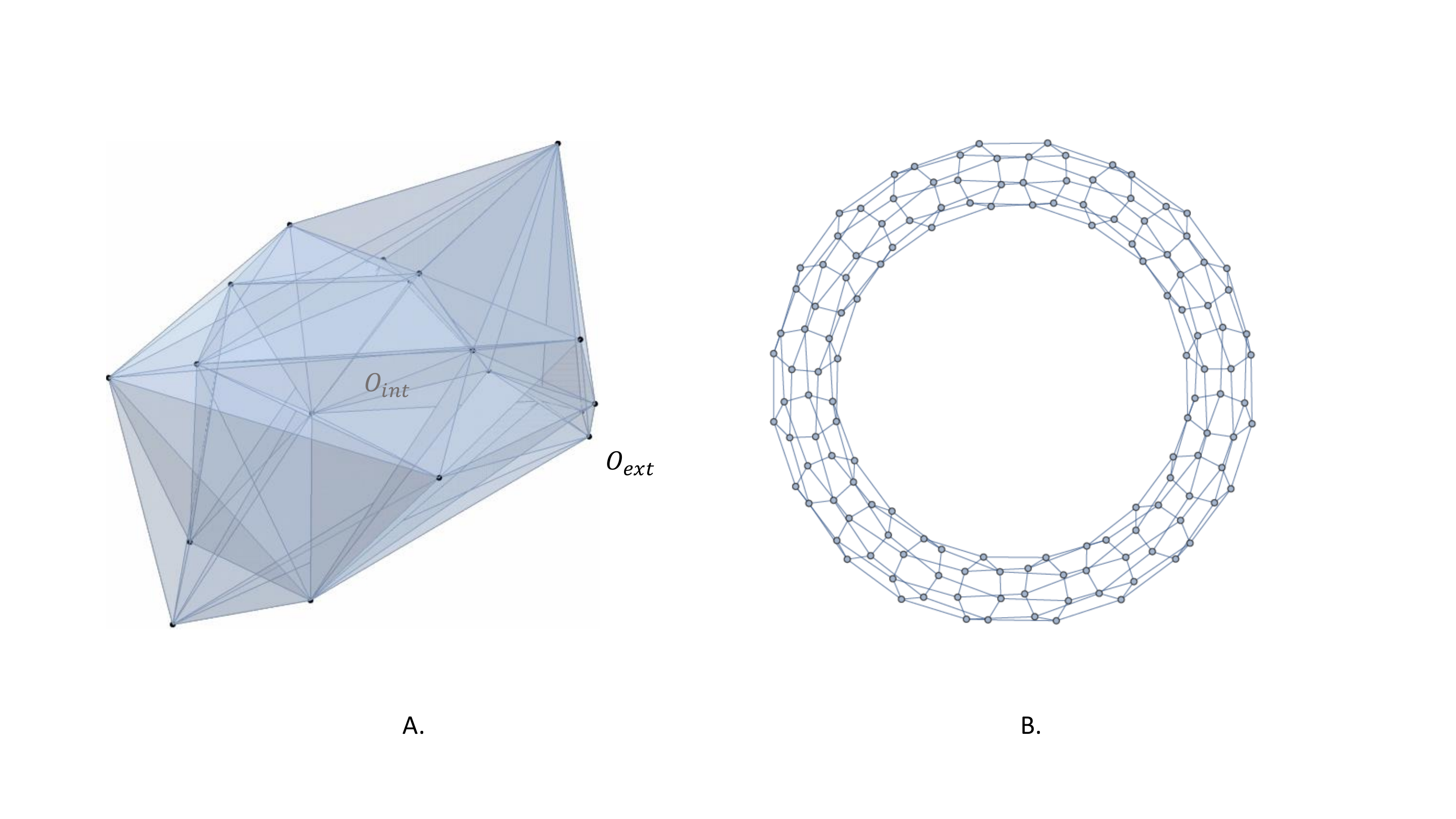}
 \caption{Examples of simplicial complexes: A) The internal observer $\mathcal{O}_{int}$ can be associated with the interior and the boundary of the left object. The external observer $\mathcal{O}_{ext}$ can be associated with the vertices in the simplified case, which are the realization point-events of the internal observer. B) The patch of torus with the simplicial complex structure.
 }
 \label{SimplicalComplexInt}
\end{figure*}

\section{Towards representation of the internal observer}\label{secIII}

In the search for formal representations of internal and external observers, we turn to inspirations from algebraic topology. One of the key inspirations is the concept of impossible topological objects, such as the famous Penrose tribar (Fig. \ref{Tribar}) or the Möbius strip \cite{Penrose}. The internal observer can be associated with a global impossible object that locally gives possible objects - external observers and the external observers remain part of the global internal observer. The key analogy relates to the difficulties observed in finding a consistent theory of quantum gravity, where local observations do not reflect the global aspects of reality \cite{GlobalReality1, GlobalReality2, GlobalReality3}. Another example of this is quantum non-locality, which can be observed in space-time for entangled multi-partite systems where local subsystems cannot detect that they are part of a global entangled state, and if measured, they would generate correct local distribution of measurement results. Recent research \cite{Cot1, MTSEH, NowakowskiAIP} suggests the existence of quantum entanglement in both space and time, which seems to be more fundamental than space-time itself. However, even though measurement statistics are available for measurement results, we cannot directly access the internal reality of the system that connects the measurement/interaction points. 

In this paper it becomes clear that \textit{classical, quantum and gravitational aspects of reality are the next levels of inherent hierarchy of observers and their realities, starting from local purely classical through local quantum to more global and non-local levels.}

To construct a formal representation of internal and external observers, we are covering the system with a simplicial complex (Fig. \ref{SimplicalComplexInt}). Thinking in terms of simplexes seems to be natural, especially for the boundary of the system, since any manifold (which is formally a set of points with connections) can be represented as a simplex. In a natural way, we would like to associate some spaces (in this case, algebras of observables) with these event points, which can be understood as points of interaction. The natural formal mathematical candidate for this is a consistent bundle of stalks which is a sheaf.
Thus, we propose to represent the concept of an internal observer $\mathcal{O}_{int}$ using quantum sheaves on simplicial complexes $\triangle$ (in particular graphs), where each stalk associated with a vertex represents the local reality of an external observer $\mathcal{O}_{ext}$ and can be formally represented by a local quantum manifold (e.g. a Hilbert-Schmidt space) (see \ref{Appendix3}). An important point is that each vertex can in general represent a nested graph (even generating a fractal graph) (see \ref{Appendix3}).

A vertex can be associated with a physical (realized) event, which is a primitive of the theory and serves as a realization point (an element of a space of point-events) of the internal observer. The node or event is a realization space that is equivalent, in this paper, to being "observed" or externalized due to interaction. Furthermore, nodes (events) are "places" of fibration of the internal observer, where reality buds. Thus, one could also visualize an internal observer object with a budding abstract tree.

To begin representing the internal and external observers, we need to review some concepts from sheaf cohomology theory \cite{Sheaf1, Sheaf2}, which are detailed in the appendix (see \ref{Appendix3}).

For a graph $\mathcal{G}=\{V,E\}$ and its vertices $v_i \in V$ and edges $e_i \in E$, it is worth mentioning that the vector spaces $\mathcal{F}(v)$ and $\mathcal{F}(e)$ are called \textit{stalks} of a sheaf $\mathcal{F}$. The sheaf can be interpreted as a bundle or a collection of stalks of data bound together by the underlying graph or the linear maps. Furthermore, the linear mappings $\mathcal{F}_{v\leftharpoonup e}$ linking objects from $\mathcal{F}(v)$ with the objects from $\mathcal{F}(v)$ encode the consistency condition that can be verified locally and which implies a sheaf structure of the spaces over the graph $\mathcal{G}$.   

If we associate with each vertex $v_i$ a local algebra of quantum observables $\mathcal{A}(O)$ then we still build a sheaf over a simplical complex. It is important to remember that formally the space of quantum hermitian operators (observables) is still a special vector space. The vertices of the simplical complex (of which a special case is a graph) are associated with the realization points by means of each interactions occur.  

\textit{As stated in the paper, we associate an algebra of observables with each level of complexity of the simplicial complex.} 

Assuming that local points of interaction (and the locally generated space-times) result from the more global structure of an internal observer, we can also look for algebraic features that persist at higher levels of complexity of an internal observer and its super-algebra of observables. Thus, naturally, we turn to (co)homology groups (see \ref{Appendix3}) and their persistence as a playground for our further considerations.

\subsection{Hierarchy of observers(ables)}

Our approach to representing both internal and external observers is based on information-theoretic concepts. These concepts are aligned with the fundamental principles of quantum mechanics, quantum field theory (QFT), and loop quantum gravity. However, they are viewed as even more fundamental, as they serve as the building blocks for the development of future theories of quantum gravity and quantum information.

By using information-theoretic concepts, the behavior of both internal and external observers can be described in a way that is consistent with the principles of quantum mechanics. This allows to better understand the role of information in the physical world and to develop new insights into the nature of reality.

The internal observer, denoted as $\mathcal{O}_{int}$ and associated with the super-algebra $\mathcal{M}_\infty$, serves as the object that glues observation events together due to its relation to infinite cohomology groups \cite{LieCoh1, LieCoh2, LieCoh3}.

We propose that the internal observer can be associated with a superspace $S_\infty$ consisting of connected topological spaces, including fractal spaces. This superspace can be represented as simplicial complexes, where each node represents a nested space. While these spaces can be of an abstract nature, or even regular space-time manifolds, we will focus on sheaves of local algebras of observables. The assumption of connectivity is due to the idea that \textit{the internal observer ultimately stores information connections within its internal space}. These intuitions shape reasoning about the algebraic structure of observables associated with this object.

To cover the internal observer topologically, multidimensional patches of local spaces are engaged. We construct its internal super-algebra through the sheaf ($\mathcal{F}$) cohomology group of observables on a simplicial complex $\triangle$ (see \ref{Appendix3}).

 \begin{defn}
The internal observer $\mathcal{O}_{int}$ is associated with an infinite sheaf cohomology super-algebra of observables $\mathcal{M}_\infty \cong H(\triangle,\mathcal{F})$.  
 \end{defn}

Using this terminology, one can associate a super-algebra $M_\infty$ and the filtration mechanism $\mathbf{F}$ (see \ref{Appendix3}) with the internal observer. After applying the $\mathbf{F}$ filtration to the k-level, one obtains a k-level simplicial complex (Fig. \ref{SimplicalComplexHierarchy}) with an associated algebra of observables that are accessible at that level of complexity:
\begin{equation}
   \mathcal{M}_\infty \rightarrow\cdots\rightarrow \mathbf{F}_k \mathcal{M}_\infty = \mathcal{M}_k  
\end{equation}
For the 1st-level of the hierarchy, one gets $\mathcal{M}_1(O)$ that can be associated with a quantum algebra of observables.

\begin{figure*}
\centering
\includegraphics[width=\textwidth]{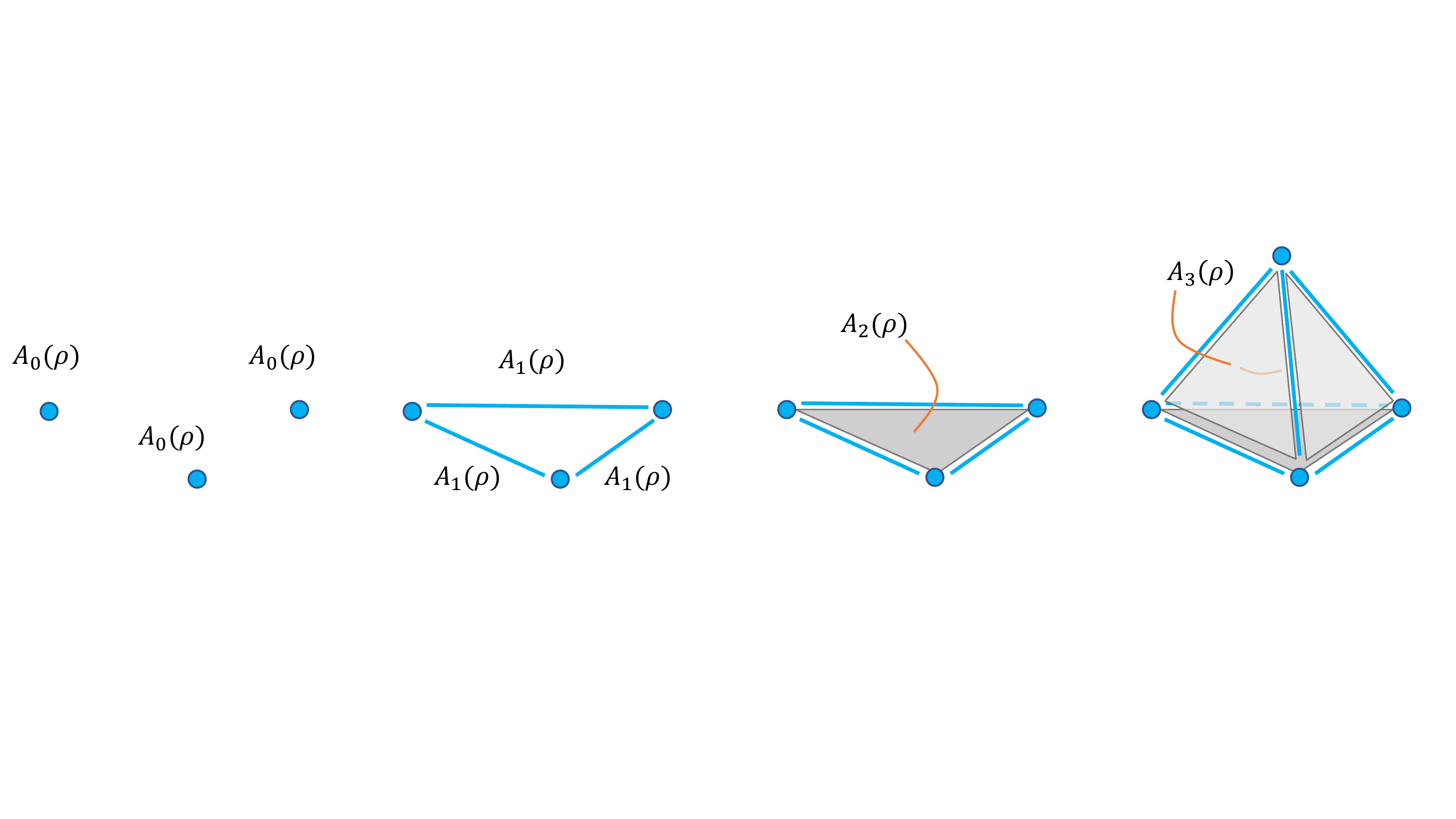}
\caption{The Hierarchy of Observers: Local algebras, denoted as $A_0$, are associated with each vertex of the 0-simplex, while non-local observables are associated with higher levels of the simplicial structure of observers, $A_1$ with edges, $A_2$ with faces and $A_3$ with the 'interior' of the 3-simplex. This hierarchy can be continued to higher-order simplexes. }
\label{SimplicalComplexHierarchy}
\end{figure*}

\begin{defn}
The external observer at the n-level of abstraction, denoted as $\mathcal{O}^{n}_{ext}$, is represented by a consistent collection of algebras $\mathcal{O}^{n}_{ext}: \mathcal{A} \rightarrow \mathfrak{M}(\mathcal{A})$ on a n-simplicial complex $\triangle_n$. Each vertex in the complex is associated with an algebra of accessible observables, denoted as $\mathcal{A}(O_{ext})$
\end{defn}
\textbf{At the 0th-level of the hierarchy}, every node can be considered as an individual external observer. The self-consistency of this representation is ensured by the principle of self-reconciliation of information among (external) observers.
As an example, one could consider a bipartite system AB in a maximally entangled quantum state $|\Psi\rangle=\frac{1}{\sqrt{2}}(|00\rangle+|11\rangle)$. At the 0th-level one can consider measuring one of its subsystems, e.g. A which is in a local mixed quantum state $\rho_{A}=\frac{1}{2}(|0\rangle\langle0|+|0\rangle\langle0|)$.  Further, we will discuss an example with such a system evolving in time which makes this structure more complex.

Analogous to sheaf cohomology, we assume that the information about the physical system at a given level of the hierarchy is consistent with the information at higher levels. Additionally, we assume that the information at a given level of hierarchy is consistent among connected external observers from this level. If we use the sheaves on simplicial complexes analogy, we require consistency in the sections over the observable sheaves.

These assumptions can be formally axiomatized as follows:
\begin{prop}
\textbf{The principle of self-reconciliation of information for external observers.} 
For any two observers $\mathcal{O}_1$ and $\mathcal{O}_2$ connected with an edge (an information connection) $E=\mathcal{O}_1:\mathcal{O}_2$:
\begin{equation}
\mathcal{F}_{O\leftharpoonup E}(\mathcal{O}_1)=\mathcal{F}_{O\leftharpoonup E}(\mathcal{O}_2)
\end{equation}
where $\mathcal{F}_{O\leftharpoonup E}:\mathcal{F}(E)\rightarrow \mathcal{F}(O)$ represents a map from the stalk associated with a node $O$ to the stalk associated with an edge $E$ for every connected pair $\{O,E\}$.
\end{prop}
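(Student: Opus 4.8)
The plan is to read the displayed identity not as a fresh theorem but as the section (gluing and separation) axiom for the observable sheaf $\mathcal{F}$, spelled out on the cover of $\triangle$ by vertex stars, equivalently as the vanishing of the degree-zero coboundary $\delta^{0}$ in the cellular cochain complex $C^{\bullet}(\triangle,\mathcal{F})$ recalled in \ref{Appendix3}. An external observer $\mathcal{O}^{n}_{ext}$ was defined above as a \emph{consistent collection} of local observable algebras indexed by the vertices of $\triangle_n$; the proposition merely names the form that ``consistent'' must take across an edge, so the work is to (i) write down what a global section of $\mathcal{F}$ is in these terms and (ii) identify it with the stated equality.

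First I would fix an orientation on each edge and recall the restriction maps attaching to an edge $E=\mathcal{O}_{1}\!:\!\mathcal{O}_{2}$ the two legs $\mathcal{F}(\mathcal{O}_{1})\to\mathcal{F}(E)\leftarrow\mathcal{F}(\mathcal{O}_{2})$ (the paper's $\mathcal{F}_{O\leftharpoonup E}$ being their partner in the cochain complex), and then the coboundary $\delta^{0}\colon C^{0}(\triangle,\mathcal{F})\to C^{1}(\triangle,\mathcal{F})$ whose value on $E$ at a $0$-cochain $x=(x_{v})_{v}$ is the difference of the images of $x_{\mathcal{O}_{1}}$ and $x_{\mathcal{O}_{2}}$ in $\mathcal{F}(E)$. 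Next I would argue that the physical demand invoked in the text --- two observers sharing an information link $E$ cannot report contradictory data on the jointly accessible observables carried by $\mathcal{F}(E)$ --- is exactly $(\delta^{0}x)_{E}=0$ for all $E$, i.e.\ $x\in\ker\delta^{0}=H^{0}(\triangle,\mathcal{F})=\Gamma(\triangle,\mathcal{F})$. Since by definition $\mathcal{O}^{n}_{ext}$ is precisely a family of stalk elements forming such a global section, the displayed equality is then immediate from the gluing axiom; conversely the equality characterizes which families of local algebras amalgamate into a single higher-level observer, so it may equally be taken as the defining condition of the functor $\mathcal{A}\to\mathfrak{M}(\mathcal{A})$.

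Before closing I would record the two checks that keep the principle from being vacuous. For the maximally entangled pair $|\Psi\rangle$ with $0$-level observers on $A$ and $B$ joined by an edge carrying their overlap, the identity collapses to the coincidence of the reduced states on the shared subalgebra --- the marginal / no-signalling constraint --- so the principle reproduces the known quantum requirement at the bottom of the hierarchy. In the relativistic reading the edge data are space-time coincidences and the identity says every reference frame agrees on them, a cochain-level restatement of general covariance that ties the proposition back to the point-coincidence argument of Section~\ref{secII}.

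The hard part is not computational but the delimitation of the modeling input: the ``proof'' is only as strong as the assumption that the local observer data assemble into a sheaf rather than a bare presheaf --- that nature never presents mutually inconsistent local observations along an information link --- and it requires pinning down which elements actually live in $\mathcal{F}(E)$ (the common image of the two edge legs, i.e.\ the jointly measurable subalgebra) so that the type of $\mathcal{F}_{O\leftharpoonup E}(\mathcal{O}_{i})$ in the statement is unambiguous. Once those are fixed the remainder is the one-line unwinding of $\delta^{0}=0$ above, and I would present the result accordingly: the sheaf axioms together with the identification of ``consistent collection of external observers'' with ``global section of $\mathcal{F}$'' force the displayed equality, with no further hypothesis needed.
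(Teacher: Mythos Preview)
Your reading is correct and matches the paper: the proposition is not proved in the paper at all but is explicitly introduced as an axiom (``These assumptions can be formally axiomatized as follows''), with the subsequent paragraph only remarking that it ``has its roots in the sheaf-theoretic approach to graphs.'' Your proposal goes somewhat further than the paper by spelling out the identification with $\ker\delta^{0}=H^{0}(\triangle,\mathcal{F})=\Gamma(\triangle,\mathcal{F})$ and by flagging the modeling assumption (sheaf vs.\ presheaf) on which the principle rests, but this is elaboration rather than a different route --- both you and the paper treat the displayed equality as the defining consistency condition borrowed from the section axiom, not as a theorem to be derived from prior results.
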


The principle of self-reconciliation of information for external observers has its roots in the sheaf-theoretic approach to graphs. Based on the structure of the internal observer, one associates an algebra of observables with each node of the simplicial complex, and defines a stalk of a sheaf $\mathcal{F}$ for each node. By ensuring consistency via the consistency map, one guarantees that larger subsets of the object are consistent with finer subsets of the same subset \cite{Sheaf1, Sheaf2}.

In the case of the hierarchy of observers, it is required that the information collected at any level of the hierarchy is consistent with information at other levels. One can use the sheaf-theoretic approach to guarantee this consistency, just as it is done for the internal observer.

It is worth noticing that the assumption about consistency of information at a given level of observation is reminiscent of the general principle of covariance coined by A. Einstein for GR \cite{NowakowskiPrep}. 

It is also worth remembering, by applying the famous Gelfand-Neimark-Segal (GNS) representation theorem, that one can consider the algebra of observables abstractly without a particular operator realization. However, if one seeks an operator representation of the algebra, assuming that it is a *-algebra, one obtains a Banach space of bounded operators due to the isomorphism guaranteed by the GNS theorem. In particular, algebras of Hermitian observables can be associated with the nodes.

\textit{It is vital to observe that the local spaces associated with external observers can be space-time manifolds. However, the connections among them do not have to be of a spatio-temporal nature.} 

It is also observed that the higher the level of hierarchy for the observers, the more non-local observables can be associated with them. Thus, the internal observer can be represented as a globally non-local object. We can formally formulate this statement as follows:

\begin{cor}
The observers $\mathcal{O}_{ext}^{n+1}$ at the (n+1)-th level of the hierarchy ($\triangle_{n+1}$) are always more or equally non-local than the observers $\mathcal{O}_{ext}^{n}$ at the n-th level of the hierarchy ($\triangle_{n}$).
\end{cor}

This corollary can be formalized in the language of the associated algebras of observables on the (n+1)-simplical complexes and proved by induction (see \ref{Appendix4}).  The spatial and temporal quantum entanglement of multi-partite systems serves as a good example for considering the hierarchy of observers, from the classical level to local quantum subsystems and non-local multi-partite systems:

\begin{example}
Multipartite entanglement in space. 

As an example, consider a multipartite system $A_1A_2\cdots A_n$ in a quantum state $\rho_{A_1A_2\cdots A_n} \in \mathcal{B}(H_{A_1A_2\cdots A_n})$, for which the subsystem $A_k$ has a state represented by a reduced density matrix $\rho_k=Tr_{A_{i, i\neq k}}\rho_{A_1A_2\cdots A_n}$.

Naturally, if one assumes that $A_k$ is a qudit, then at the level of this single object, one cannot observe spatial quantum entanglement that may occur at the level of the whole multipartite system or its parts consisting of more than one qudit.
\end{example}

\begin{example} 
Quantum entanglement in time \& entangled histories \cite{MTSEH, NowakowskiAIP}.

Let us reconsider a spatial and temporal version of the $GHZ$-state.  A spatial GHZ-state on three qubits:
\begin{equation}
|\Psi_{ABC}\rangle=\frac{1}{\sqrt{2}}(|000\rangle + |111\rangle) \nonumber 
\end{equation}      
A temporal GHZ-state on one qubit:
 \begin{equation}
    |\Psi_{t_2 t_1 t_0})=\frac{1}{2}([0]\odot[0]\odot[0]+[1]\odot[1]\odot[1])\nonumber 
\end{equation}
Both states lead locally to mixed states e.g. $\rho_A=\frac{1}{2}(|0\rangle\langle 0| + |1\rangle\langle 1|)$ (which represents noise) and $\rho_{t_0}=\frac{1}{2}(|0)( 0| + |1)(1|)$, but also any chosen pair is in a separable state, i.e. $\rho_{AB}=\frac{1}{2}(|00\rangle\langle 00| + |11\rangle\langle 11|)$. However, both systems keep non-local nature globally. 

Furthermore, while a three-qubit spatial state $|\Psi_{ABC}\rangle$ cannot be simply extended to an $n$-qubit state due to a lack of additional resources, a temporal state can always be extended to $n$ times (reaching infinity as a limit if not constrained by any discrete limit like, for example, a Planck time) between the constrained past $t_{0}$ and future $t_{2}$, as far as the time steps are defined for the external observer.

\begin{eqnarray}
    |\Psi_{ABC}\rangle & \nrightarrow & |\Psi_{ABCC_1\ldots}\rangle\\
    |\Psi_{t_2 t_1 t_0})&\rightarrow &|\Psi_{t_2\ldots t_1 t_{0n}\ldots t_{01} t_0})
\end{eqnarray}
This tricky feature of temporal correlations is one of the reasons of polyamoric nature of time \cite{MTSEH, Grudka} as presented further for an evolution of AB system with 'injected' Pauli measurements in a pre-selected state $|\Phi_{+}\rangle$. The action of the unitary Pauli operations (Fig. \ref{GlobalHistory}) on the subsystem A in between the pre-selected $|\Phi_{+}\rangle$ and the post-selected $|\Phi_{+}\rangle$ can be represented as a history $|H_{\Gamma})$:
\begin{equation}\label{Pauli}
|H_{\Gamma})=[\Phi_{+}]\odot[\Phi_{+}]\odot[\Phi_{+}]\odot[\Phi_{+}]\odot[\Phi_{+}]
\end{equation}
with bridging operators $U(t_{1},t_{2})=\sigma_{x}\otimes I_{B}$, $U(t_{2},t_{3})=\sigma_{y}\otimes I_{B}$, $U(t_{3},t_{4})=\sigma_{z}\otimes I_{B}$ 
and $U(t_{4},t_{5})=I$ with post-selected state $|\Phi_{+}\rangle$. 
Interestingly, if we ask now for a history of the subsystem $A$ in this evolution, we get a temporal version of an entangled GHZ state:
\begin{equation}
|H_{A})=\frac{1}{2}([0]\odot[0]\odot[0]\odot[0]\odot[0]+[1]\odot[1]\odot[1]\odot[1]\odot[1])
\end{equation}
with corresponding evolution $U_{A}(t_{1},t_{2})=\sigma_{x}$, $U_{A}(t_{2},t_{3})=\sigma_{y}$, $U_{A}(t_{3},t_{4})=\sigma_{z}$ 
and $U_{A}(t_{4},t_{5})=I$
where $|H_{A})$ is derived from the global $|H_{\Gamma})$ tracing out $B$ party over all times (\cite{MTSEH}) keeping consistency of the derived reduced evolution of the subsystem (note that it is not a mere analogy of spatial trace out operation over all times, the binding evolution between the time steps has to be kept). 

\begin{figure}[h]
\centerline{\includegraphics[width=9cm]{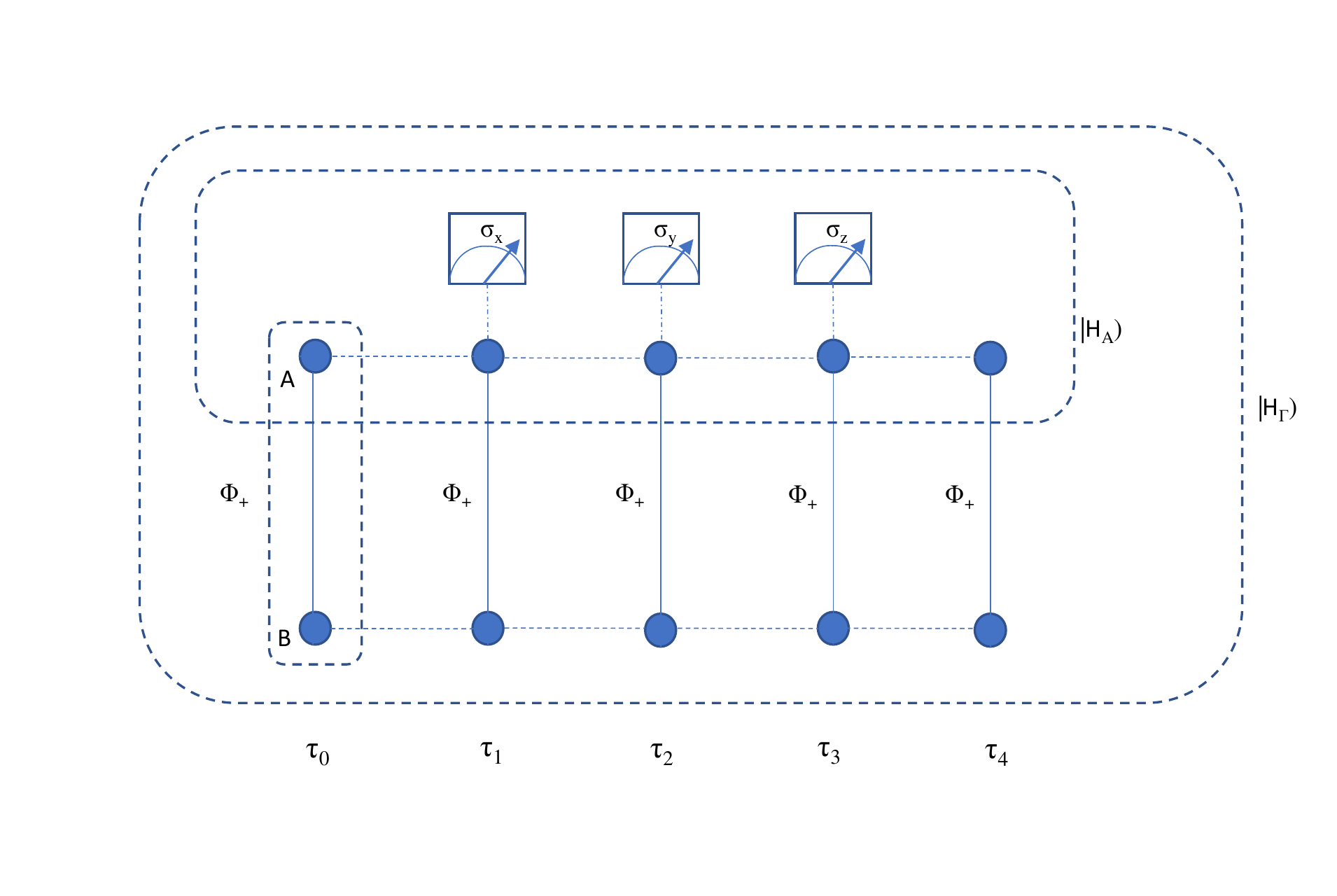}}
 \caption[Evolution]{$|H_{\Gamma})$ represents the multi-time global evolution of the bipartite system $AB$, pre-selected and post-selected in the state $\Phi_+$ with consecutive measurements on subsystem $A$. One can obtain an entangled temporal version of the GHZ state for a history $|H_A)$ from a global history $|H_{\Gamma})$. }
 \label{GlobalHistory}
\end{figure}

This local evolution can be equivalently represented by the underlying quantum structure with a trivial evolution:
\begin{eqnarray}\label{localev}
|H_{A})&=&\frac{1}{2}([0]\odot\sigma_{z}[0]\sigma_{z}\odot\sigma_{y}[0]\sigma_{y}\odot\sigma_{x}[0]\sigma_{x}\odot[0]\nonumber\\
&+&[1]\odot\sigma_{z}[1]\sigma_{z}\odot\sigma_{y}[1]\sigma_{y}\odot\sigma_{x}[1]\sigma_{x}\odot[1])
\end{eqnarray}

\end{example}

The above global and local history is constructed without a definite state of the local system A at intermediate times. However, if the local state of A is read and a definite state $|+\rangle$ or $|-\rangle$ is obtained for a specific measurement setting, then the history becomes disentangled and effectively, the considered history of the process is separable. It is important to note that measurement itself is an intrinsic part of the process and contributes to the creation of the particular history.

What appears to be the most complex and not fully understood aspect of the measurement process in modern quantum physics is the so-called collapse of the wave function. Based on the reasoning in this paper and the relativity of observation for external observers, we conclude that the collapse is a local process that is accessible to external observers at the lower levels of hierarchy (for local quantum and local classical systems when the measurement results are recorded classically). However, the entire information is still stored in the internal observer at the higher, non-local levels of hierarchy. It can be associated formally with the entangled evolution paths for spatial and temporal correlations.    

\subsection{Self-consistency of an internal observer}

While the principle for external observers is applicable to all levels of the hierarchy of observers, the principle of self-consistency of an internal observer is of a global nature. There is some analogy to the concept of quantum non-locality in space-time. If one considers a multipartite quantum states entangled in space and time, it leads to consistent sets of local classical probability distributions, however, locally at the classical level the phenomenon of quantum entanglement does not exist although the probability distributions can violate Bell-inequalities. If one gets at the higher level of the hierarchy, then can get to spatial non-local effects still not finding any signs of non-locality in time. Thus, in general the internal observer has to keep self-consistency, independently from interaction effects, at the global fundamental level which, as proposed in this paper, is of information nature. The proposed principle of self-consistency is inspired by the consistency conditions for CTCs (Closed Timelike Curves) by Deutsch \cite{Deutsch}, P-CTC (Post-selected CTCs) by Lloyd et al. \cite{Lloyd} and quantum entangled histories \cite{Griffiths, Cot1, MTSEH, NowakowskiAIP} which ensure causal consistency (\ref{Appendix2}). 

If we associate multi-dimensional sheaves on simplicial complexes with an internal observer $\mathcal{O}_{int}$, we can ask questions about self-consistency of the information loops (that can be immediately translated into the language of co-cycles in case of simplicial complexes). Thus, we consider the following representation chain:
\begin{equation}
\mathcal{O}_{int}\longrightarrow\triangle\longrightarrow\mathcal{M}_{\infty} \longrightarrow \mathbf{H}(\triangle,\mathcal{F})
\end{equation}
where $\triangle$ stands for a simplical complex, $\mathcal{M}_{\infty}$ for the infinite super-algebra of observables associated with the internal observer and $\mathbf{H}(\triangle,\mathcal{F})$ stands for sheaves cohomology over $\triangle$. 
 
\begin{prop}
\textbf{The principle of self-consistency of an internal observer.} The state of an internal observer $\mathcal{O}_{int}$
can be represented as a 'superposition' of all possible consistent loops (co-cycles) over its internal space $\mathcal{S}_\infty$:
\begin{equation}
|\Lambda\rangle\rangle=\sum_{\circlearrowleft\in\triangle} e^{i\phi(\circlearrowleft)}|\circlearrowleft\rangle\rangle
\end{equation}
where each loop is associated with an internal state $|\circlearrowleft\rangle\rangle  \in \mathcal{S}_\infty$, meeting the self-consistency conditions:\\
(*) $\langle\langle\Lambda|\Lambda\rangle\rangle=\mathbb{I}$,\\
(**) $\exists_{\alpha_i\in \mathbb{C}}\sum_i\alpha_i|\circlearrowleft_i\rangle\rangle=\mathbb{I}$, (decomposition of identity into a sum of self-consistent loops with the assumption of existence of a countable set of internal loops $\{|\circlearrowleft_i\rangle\rangle\}$).
\end{prop}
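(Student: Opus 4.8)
The plan is to realize $|\Lambda\rangle\rangle$ explicitly inside the sheaf-cohomology super-algebra $\mathcal{M}_\infty \cong H(\triangle,\mathcal{F})$ and to verify the two self-consistency conditions level by level along the filtration $\mathbf{F}$, reducing them to the fixed-point conditions for CTCs, P-CTCs and entangled histories recalled in Appendix~\ref{Appendix2}. First I would invoke from Appendix~\ref{Appendix3} the cochain complex $(C^\bullet(\triangle,\mathcal{F}),d^\bullet)$ built from the transition maps $\mathcal{F}_{v\leftharpoonup e}$, with cocycles $Z^\bullet$, coboundaries $B^\bullet$ and $H^\bullet = Z^\bullet/B^\bullet$. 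A self-consistent loop $|\circlearrowleft\rangle\rangle$ is then, by definition, the operator attached to a closed chain $\gamma\subset\triangle$ whose holonomy — the ordered composition of the maps $\mathcal{F}_{v\leftharpoonup e}$ around $\gamma$ — acts as the identity on the corresponding stalk; equivalently it is a representative of a class in $H^1(\triangle,\mathcal{F})$ (in the infinite regime, of the corresponding class in the filtered limit). This is exactly the discrete analogue of the trivial-holonomy requirement underlying CTC consistency, so $\{|\circlearrowleft\rangle\rangle\}$ is the natural carrier of the internal state and $|\Lambda\rangle\rangle=\sum_{\circlearrowleft\in\triangle}e^{i\phi(\circlearrowleft)}|\circlearrowleft\rangle\rangle$ is its superposition.

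Next I would recast condition (*) as a fixed-point problem. Viewing the loop states as elements of the operator space attached to $\triangle$ and restricting to a finite filtration level $\mathbf{F}_k\mathcal{M}_\infty=\mathcal{M}_k$, the map sending a candidate $|\Lambda\rangle\rangle$ to the partial contraction $\langle\langle\Lambda|\Lambda\rangle\rangle$ over the internal degrees of freedom is a continuous self-map of a compact convex set (the normalized operators of the truncated algebra), precisely of the Deutsch type $\rho=\mathrm{Tr}_{\mathrm{int}}[U(\rho_{\mathrm{in}}\otimes\rho)U^\dagger]$; a Brouwer/Schauder argument yields a solution, and the remaining freedom in the phases $\phi(\circlearrowleft)$ together with an overall scalar fixes $\langle\langle\Lambda|\Lambda\rangle\rangle=\mathbb{I}$ by a finite Gram-matrix rescaling. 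At the first level this reproduces the quantum CTC/P-CTC normalization of \cite{Deutsch,Lloyd}, so (*) holds on each $\mathcal{M}_k$ and one passes to $\mathcal{M}_\infty$ along $\mathbf{F}$.

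For condition (**) I would prove a sum rule: a complete family of self-consistent loops through each vertex resolves the identity, $\sum_i\alpha_i|\circlearrowleft_i\rangle\rangle=\mathbb{I}$. This rests on two inputs — the local completeness of the external observer's stalk (its resolution of identity at the $0$-th level) and the cochain identity $d^2=0$, which forces the coboundary contributions to cancel in the sum so that only genuine $H^1$ classes survive. This is the same mechanism as the consistency (sum) rule of the entangled-histories formalism recalled in Appendix~\ref{Appendix2}, and it exhibits $\{|\circlearrowleft_i\rangle\rangle\}$ as the required countable spanning family, so that $\mathbb{I}$ indeed lies in their $\mathbb{C}$-span.

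The hard part is the infinite/fractal regime in which $\mathcal{S}_\infty$ contains nested and self-similar spaces: one must guarantee that such a countable family of loop generators exists and spans, and that the superposition $|\Lambda\rangle\rangle$ converges. My strategy is to carry out everything at each finite level $\mathbf{F}_k\mathcal{M}_\infty$, where the claims are elementary linear algebra, and then pass to the colimit/limit along $\mathbf{F}$; the delicate step, which I expect to be the main obstacle, is the compatibility of the chosen loop bases across levels, for which I would run a Mittag-Leffler / stabilization argument on the inverse system $\{\mathcal{M}_k\}$ to control the tower and secure both the spanning property of (**) and convergence in $\mathcal{S}_\infty$.
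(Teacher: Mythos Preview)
The fundamental mismatch is that the paper does not prove this proposition at all: it is introduced as a \emph{principle}, i.e.\ a postulate that defines what the internal observer's state is required to satisfy, explicitly ``inspired by the consistency conditions for CTCs, P-CTCs and quantum entangled histories'' and presented as their higher-level analogue. The surrounding text proposes the representation chain $\mathcal{O}_{int}\to\triangle\to\mathcal{M}_\infty\to\mathbf{H}(\triangle,\mathcal{F})$, states the principle, and then discusses its consequences (the boundary maps $\partial_k$, super-entanglement, the hierarchy $E_{int}\to E_k\to\cdots$); there is no derivation of (*) or (**) from more primitive assumptions. So what you have produced is not a comparison target but an attempt to prove an axiom.

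Even taken on its own terms, several steps of your argument would not go through. Your recasting of (*) as a Deutsch-type fixed point is not faithful: Deutsch's condition is $\rho=\mathrm{Tr}_{\mathrm{sys}}[U(\sigma\otimes\rho)U^\dagger]$ for a density matrix, whereas here $\langle\langle\Lambda|\Lambda\rangle\rangle=\mathbb{I}$ asks for an operator-valued norm to equal the identity, which is a normalization constraint rather than a self-map fixed point; a Brouwer/Schauder argument does not naturally produce that output, and ``fixing'' it by Gram rescaling presupposes exactly the completeness you are trying to establish. For (**), the identity $d^2=0$ controls which cochains are cocycles but says nothing about whether cohomology classes span a resolution of the identity in the observable algebra; that is a spectral/completeness statement about the stalks, not a cohomological one. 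Finally, the Mittag-Leffler step assumes an inverse system with surjective (or at least stabilizing) transition maps, which the paper's filtration $\mathbf{F}_k\mathcal{M}_\infty$ is not shown to possess. The honest reading is that (*) and (**) are imposed, not derived, and your machinery is attacking a statement the paper treats as definitional.
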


We shall introduce also a mapping from the internal state to the external observers representation, which can be understood as mapping from the interior to the boundary of the internal observer:
\begin{equation}
    \partial_k: |\circlearrowleft\rangle\rangle\rightarrow |\circlearrowleft\rangle_k
\end{equation}
where $|\circlearrowleft\rangle_k$ denotes representation of the state for the external observer at $k$th-level of the hierarchy (in relation to the k-simplex $\triangle_k$).
Following the QFT connotations \cite{}, we can call $|\Lambda\rangle\rangle$ the ground state of the internal observer and $|\circlearrowleft\rangle\rangle$ states associated with internal information loops (knots).  

In this picture of the internal observer, it is a static object consisting of all possible information loops over its structure which are self-consistent. In natural way, one can generate a dynamic representation traversing this object at different levels of the  observation hierarchy.  The loops can be associated to consistent co-cycles belonging to sections over the sheaves on the simplicial complex. Apparently, the consistency is ensured by higher-level entanglement of $|\Lambda\rangle\rangle$. We are calling that \textit{super-entanglement} since it can lead to a lower level entanglement when the boundary (or projection) of the state is performed to $\triangle_2$ or $\triangle_1$ (see \ref{Appendix4}).

Following the concept of hierarchy of observers and the concept of the internal super-observer we can also build a similar hierarchy for entanglement ensuring consistency of information distributed in this structure:
\begin{equation}
    E_{int}\rightarrow E_k \cdots\rightarrow E_{q (quantum)}\rightarrow E_{0(classical)}
\end{equation}

Typical spatial or temporal quantum entanglement is lower-level projection from the higher dimensional object.
\textit{Causality is specific for external observers since it is a feature associated with events on the boundaries of the internal observer.}

\textit{The higher dimensional super-connections in the internal observer are a predecessor of entanglement at lower levels of the observers' hierarchy, which is the predecessor of classical causality.} All this phenomena exist at once within the internal observer object but are characteristic for its different sections (or filtration levels).
Again, we can conclude that the so-called collapse of wave function is the phenomenon at the lowest levels of the hierarchy but it does not destroy consistency of the internal observer and does not collapse its existence, which brings also a new perspective on the measurement process itself \cite{NowakowskiPrep}.

\section{Loop quantum gravity perspective}\label{secIV}

Loop Quantum Gravity (LQG) \cite{C.Rovelli, LQG1, LQG2, LQG3} is a proposed theory of quantum gravity that attempts to reconcile Einstein's theory of general relativity with quantum mechanics. In LQG, the geometry of spacetime is quantized, and the fundamental objects are loops and networks of loops called spin networks.

The concept of an internal observer can be incorporated into LQG by associating a spin network with each vertex of the simplicial complex representing the internal observer. Each spin network represents the quantum geometry of the internal observer at that vertex. The edges of the spin network correspond to connections between the vertices of the internal observer, which in turn correspond to possible interactions between the internal observer and its environment.

Furthermore, the algebra of observables associated with each level of complexity of the simplicial complex can be represented as a set of operators acting on the spin networks. These operators correspond to physically observable quantities in the internal observer's reality.

Overall, the incorporation of the concept of an internal observer into LQG provides a new perspective on the theory, allowing for the study of the interplay between quantum gravity and observer-dependent phenomena.

In this section, I delve into the concept of internal and external observers in the context of transition amplitudes in quantum gravity. It's worth noting that any measurement involving the gravitational field is inherently a quantum gravitational measurement, as highlighted by C. Rovelli \cite{C.Rovelli}. To calculate transition amplitudes, we use 'spin networks' or quantum spin networks, which are background-independent and represent the structure of space-time itself. The language of spin networks is also naturally suitable for expressing the path-formulation of relativistic quantum field theories, making it a valuable tool for studying the behavior of internal and external observers in the context of quantum gravity.

C. Rovelli has stated \cite{C.Rovelli} that the main accomplishments of covariant LQG to date are as follows: (i) The amplitudes are finite at every order; (ii) At each order, the amplitudes demonstrate a well-defined classical limit that corresponds to a truncation of classical general relativity; and (iii) The theory has been expanded to include fermions and Yang-Mills fields.

The general form of a propagator depends on the fluctuating metrics of gravitational fields. However, the fundamental idea of constructing paths for the evolution of a system is essentially the same as the path integral approach used by R. Feynman for quantum systems. Therefore, we can still enhance our understanding by employing the formalism of internal and external observers.

One can consider the transition amplitude for the process starting with metric $g_{i}$ at time $t_{i}$ and finalizing with a metric $g_{f}$ at time $t_{f}$. The choice of time-indexes is not crucial due to the assumption of diffeomorphism-invariance of the fields. One can immediately observe that what matters is the relation between the indices for the external observers. The propagator gets the following form:

\begin{equation}
    \mathcal{A}(g_f, g_i)=\int_{g_{i},t=0}^{g_{f},t=0} D[g_{\mu\nu}(x)]e^{iS_{GR}(g)}
\end{equation}
where $S_{GR}(g)$ stands for the general relativistic action of the path between the initial and final state. 
One of the unsolved problems of quantum gravity is a comprehensive understanding of the techniques of construction of the measure $D[]$. 
Assuming that physical space-time is discrete, loop quantum gravity sheds some light on this problem by simplifying the calculation of propagators in many cases. It is important to note, as pointed out by C. Rovelli \cite{C.Rovelli}, that $x$ does not represent the classical variable, but instead labels eigenstates of the variable, specifically, eigenstates of the 3-geometry. Because of the discreteness of space-time, the propagator in quantum gravity must be a function of spin networks \cite{C.Rovelli}.  

As a consequence, the propagator mentioned above can be expressed as a function of the initial spin network state $|s_{i}\rangle$ and the final spin network state $|s_{f}\rangle$, which are defined by the boundary conditions set by the external observers for the history of the system. There are also intermediary states of the spin network that naturally arise. The probability amplitude is determined by summing over the histories of the spin networks:
\begin{equation}
    \mathcal{A}(\sigma_f, \sigma_i)= \sum_{\sigma}\mathcal{A}(\sigma) 
\end{equation}
where $\sigma$ is again a discrete history of the spin networks: $\sigma=(s_{f}, s_{n}, \ldots, s_{1}, s_{i})$ and the probability amplitude associated with a single history is: $\mathcal{A}(\sigma)=\Pi_{l} \mathcal{A}_l({\sigma})$ (with $l$ indexing the history steps for the external observer). The history of the spin-network is a so-called  spinfoam and different models for its representation have been proposed \cite{}. In analogy to the quantum path integrals, one can state $\mathcal{A}(\sigma)=\langle s_{i+1}|e^{-\int d^3 x H(x) dt} |s_{i}\rangle_{\mathcal{K}}$ for the evolution between steps $|s_{i}\rangle$ and $|s_{i+1}\rangle$. We can now state that state of the individual history of a spin-network can be represented, in analogy to entangled histories formalism, as:
\begin{equation}
    |s)=[s_{f}] \odot [s_{n}]\odot \ldots \odot [s_{1}]\odot [s_{i}]
\end{equation}
and furthermore, we can construct superpositions of the spin network histories (spinfoams):
\begin{equation}
    |\mathcal{H}_s)=\sum_j \alpha_j [s^{j}_{f}] \odot [s^{j}_{n}]\odot \ldots \odot [s^{j}_{1}]\odot [s^{j}_{i}]
\end{equation}
The evolving spin-network can be represented as a specific graph $\Gamma$ evolving with each step (that can be interpreted as a time flow for the external observer or an external step-index) where changes are governed by the Hamiltonian $H$ acting on nodes of the graph. 
\begin{widetext}
\begin{equation}
\mathcal{A}(\sigma, \sigma')= \sum_{\partial\sigma=\sigma \cup \sigma'}w(\Gamma(\sigma))\prod_f\dim (j_f)\prod_e \mathcal{A}_e(j_f, i_e)\prod_v \mathcal{A}_v(j_f, i_e)
\end{equation}
\end{widetext}
where $\mathcal{A}_v(j_f, i_e)$ represents the vertex amplitude, $\mathcal{A}_e(j_f, i_e)$ represents the edge amplitude and $w(\Gamma(\sigma))$ is a weight factor. As a matter of convention, $v$ stands for the vertices of the graph where branches of the evolution meet, $e$ stands for the edges of the graph being the wordlines of the vertices and $f$ for faces of the multidimensional graph.  
This sum is over a spinfoam $\sigma$ bounded by spin networks $\sigma$ and $\sigma'$. In this case, we can associate the boundary with the external observer imposing initial and final conditions for the evolution or alternatively, interacting with the internal observer of the spinfoam at the boundary. 

Again the transition amplitude for the spinfoam is defined by means of the sum over histories between the bounded two states of the gravitational field represented by $s$ and $s'$ which can be associated with states of the external observer. 

It is worth mentioning that with information about faces f, edges e and vertices v, one can reconstruct the graph $\Gamma$ representing information about the evolving physical object (in this specific case being a region of space-time). From this perspective graphs are mathematical structures representing data sets accessible to observers. 

In this case \textit{external observers} are directly linked with accessible sets of external observables on spin networks bounding evolution of the system: 
\begin{equation}
    \mathcal{O}_{ext}=\sum_{j}|\sigma^{j}\rangle\langle \sigma^{j}|
\end{equation} 
One can regard the metric tensor $g_{\mu\nu}$ as an object that encodes spatio-temporal information from the perspective of external observers. However, for internal observers, one can associate an internal tensor metric $\eta_{\mu\nu}$ (for the internal space $\mathcal{S}_\infty$) that encodes the observer's own spatio-temporal information. At branching points, where events occur from the perspective of external observers, the internal metric is equal to the external metric.

One considers histories of spin networks (in QFT one considers field histories) but one should follow the question: where is the information stored for the histories and by means of which physical reality is constituted? This questions is naturally related to information and energy of the physical system which if bounded, leads to bounded bulk of potential histories. To simplify this considerations, we employ \textit{the internal observer which is linked with totality of information stored within}. The external observer can be represented as a boundary of the internal observer.

We would like to mention that the concepts discussed in this paper do not exhaust the subject of observation in quantum gravity but highlight the problem of realization which seems to be central for definition of evolution and energy functionals in the future theory. By using this approach, we can gain a deeper understanding of the relationships between the physical system, the information it contains, and the role of observers in shaping our understanding of these phenomena.

\section{The algebra of QFT observables}\label{secV}

In this section, we will review the concepts of internal and external observers, including their hierarchy, within the context of algebraic quantum field theory (AQFT) and its core postulates. Readers who are interested in a comprehensive elaboration of this approach to QFT are invited to consult the AQFT literature \cite{DHR1,DHR2,DHR3,DHR4,DHR5}.

This approach to QFT is also known as the Haag-Kastler axiomatization of QFT \cite{HKastler} and involves associating unital $*$-algebras of observables to regions $\mathcal{R}$ of spacetime, i.e. $\mathcal{R}\rightarrow \mathcal{A}(\mathcal{R})$, with the assumption, inherited from the theory of relativity, that space-like separated regions are associated with commuting algebras of observables. Specifically, for any two causally disjoint regions $\mathcal{R}_1$ and $\mathcal{R}_2$, all observables $O_1 \in \mathcal{A}(\mathcal{R}_1)$ and $O_2\in \mathcal{A}(\mathcal{R}_2)$ have a vanishing commutator, i.e. $[O_1, O_2]=0$ (the Einstein causality). It is also required that whenever $\mathcal{R} _1 \subset \mathcal{R}_2$, then $\mathcal{A}(\mathcal{R}_1) \subset \mathcal{A}(\mathcal{R}_2)$ (the isotony condition), meaning that the local algebras create a nested algebraic structure. These algebras also satisfy the Poincaré covariance and the existence of dynamics axioms. Specifically, if $\mathcal{R}_1 \subset \mathcal{R}_2$ and $\mathcal{R}_1$ contains a Cauchy surface $\Sigma$ of $\mathcal{R}_1$, then $\mathcal{A}(\mathcal{R}_1)=\mathcal{A}(\mathcal{R}_2)$. The Poincaré covariance means that symmetric transformations $\rho$ of the local regions of spacetime can be associated with the automorphisms $\alpha(\rho):\mathcal{A}(\mathcal{R}_1)\rightarrow \mathcal{A}(\rho\mathcal{R}_1)$. This structure of algebras for a nested family of regions $\mathcal{M}=\bigcup_i \mathcal{R}_i$ can be used to reconstruct the whole spacetime algebra $\mathcal{A}(\mathcal{M})=\bigcup_i \mathcal{A}(\mathcal{R}_i)$. To construct correlation functions, it is also necessary to define the functional $\langle\cdot\rangle :\mathcal{A} \rightarrow \mathbb{C}$ (where $\langle\mathbb{I}\rangle=1$), which is used to calculate the expectation values for the algebra observables. With the above defined, one can construct, for example, a Hilbert space spanned by the states $|\psi(\hat{x})\rangle=\phi(x_1)\cdots \phi(x_n)|0\rangle$ for the scalar field $\phi(x)$.

The algebraic axiomatization of QFT is a framework that has been extended and modified in various ways to address specific physical phenomena, including Conformal Field Theory and other variations. Despite these extensions, the core concepts of the framework remain unchanged, and can be seen as an attempt to reconcile quantum theory with the relativistic understanding of space-time.

However, the challenge of reconciling classical, well-defined and stable notions of space-time with quantum concepts is a major hurdle in formalizing QFT within this framework. This is because in the algebraic approach, space-time is seen as a collection of quantum systems, and the notion of a classical, well-defined space-time emerges only in the classical limit. As a result, the formalization of QFT within the algebraic approach requires a careful treatment of the interplay between quantum and classical concepts, and this is an active area of research in the field.

In the context of this paper, it is apparent that \textit{ it is an attempt to represent local (external) observers within a well-defined space-time} without exploring an internal observer representation, which is inherently non-local and not defined in terms of space-time. It should be noted that the nested algebra of observables attempts to reconstruct the internal observer (in analogy to quantum tomography \cite{Tom1, Tom2}) and is therefore necessarily incomplete.   
 
Actually, Poincare covariance justifies thinking about space-time in terms of information by associating abstract observable group elements with transformations of space-time as the arena of physical phenomena. This is based on the principle of self-consistency of information for external observers.

Building on the 'Haag-Araki' representation of QFT algebras, it can be assumed that all local algebras ${\mathcal{A}(\mathcal{R}): \mathcal{R} \subseteq
\mathcal{M}}$ are von Neumann algebras that act on some Hilbert space $\mathcal{H}$, and that there exists a translationally invariant vacuum state $|\Omega\rangle \in \mathcal{H}$. Consequently, measures of quantum entanglement, which are commonly used to study entanglement in discrete quantum systems \cite{EM1}, can be applied to measure quantum entanglement among sub-regions of $\mathcal{M}$. This research area is rapidly growing and gaining more attention recently \cite{EntAQFT1, EntAQFT2, EntAQFT3}.

The network and hierarchical structure of observers defined in this paper also applies to AQFT. AQFT assumes that all physical information can be reconstructed from the observables network, which is its fundamental object, and the connection with the Hilbert representations can be derived through the choice of the algebraic state. For instance, in the aforementioned QFT on Minkowski space-time, the vacuum state serves as the algebraic state.

The algebraic formulation of QFT also incorporates the concept of superselection sectors. Physical observables can be represented as block diagonal operators on a Hilbert space $\mathcal{H}$ with a decomposition $\mathcal{H}=\bigoplus_{i\in I}\mathcal{H}_i$, where $i$ corresponds to the superselecting index and $\mathcal{H}_i$ represents the superselection sector. For example, in a theory with charge, the observable $\mathcal{Q}$ is conserved during any interaction in the measurement process, and has a discrete spectrum with eigenvalue equation $\mathcal{Q}|\Psi\rangle=q |\Psi\rangle$. By introducing a zero-one projective measurement $P$, we obtain $\mathcal{Q}P|\Psi\rangle=qP |\Psi\rangle=P\mathcal{Q}|\Psi\rangle$, which implies $[\mathcal{Q},P]|\Psi\rangle=0$ due to charge conservation. The charge eigenspaces can then be regarded as superselection sectors of the global Hilbert space.

Superselection sectors correspond to unitarily inequivalent representations of the algebra of observables. A notable contribution of AQFT \cite{DHR1, DHR2, DHR3, DHR4, DHR5} is that superselection sectors correspond to irreducible representations of a compact Lie group $\mathcal{G}$ \cite{Hilgert, Stasheff}, which is the global gauge group. The local algebra of observables $\mathcal{A}(O)$ is uniquely derived from the net, and consists of gauge-invariant fields $\mathcal{F}(O)$ (which can include smeared observables such as Dirac fields). 
 
One promising approach for constructing a simplified representation of an internal observer superalgebra involves building on the cohomology of infinite-dimensional Lie algebras $L_\infty$, along with additional principles that have been developed in previous sections of this research.

In particular, the study of infinite-dimensional Lie algebras in quantum field theory \cite{LieCoh3} has yielded important insights. For example, the affine Kac-Moody algebras and the Virasoro algebra have been extensively studied and have proven to be powerful tools in the analysis of conformal field theories and other physical phenomena.

The cohomology of $L_\infty$ \cite{LieCoh1, LieCoh2, LieCoh3, LieCoh4} is particularly interesting in this context, as it provides a framework for understanding the algebraic structure of observables in QFT. By analyzing the cohomology classes associated with $L_\infty$, it is possible to gain a deeper understanding of the symmetry groups that underlie physical phenomena, and to explore the mathematical connections between seemingly disparate physical systems.

A Lie algebra $(\mathfrak{g},[\cdot])$ consists of a vector space $\mathfrak{g}$ and a bilinear map $[\cdot]:\mathfrak{g} \wedge\mathfrak{g}\rightarrow \mathfrak{g}$, satisfying the Jacobi identity. As an example, one can consider any associative algebra $\mathcal{A}$ with the commutator $[\mathbb{A},\mathbb{B}]=\mathbb{A}\mathbb{B}-\mathbb{B}\mathbb{A}$ giving it the structure of a Lie algebra. In general, associative quantum observable algebras have the structure of Lie algebras.

As an extension to graded tensor vector spaces, we can introduce an infinite Lie algebra $L_\infty$ as an algebra consisting of a $\mathbb{Z}$-graded space and a collection of linear maps $[\cdot]:\mathfrak{g} \wedge^k\mathfrak{g}\rightarrow \mathfrak{g}$ of degree $2-k$ called $k$-brackets, satisfying the generalized Jacobi identity. A degree of zero means that the space is ungraded. There exist various equivalent definitions of infinite Lie algebras \cite{LieCoh3}, with extensive studies of their applications in QFT and the standard model.

In the context of observers and their representations, it is important to also consider the concept of cohomology of Lie algebras. Similar to sheaf cohomology on simplicial complexes, we define Lie algebra cohomology as $\mathbf{H}^k(\mathfrak{g},F)=\frac{\ker\delta^k}{\operatorname{im}\delta^{k-1}}$ (see \ref{Appendix}), where $F$ stands for the representation of $\mathfrak{g}$. This cohomology group describes the obstruction to the existence of an $F$-valued k-cocycle that satisfies a certain condition and are discussed in the section about internal loops of the internal observer. This cohomology group provides information about the algebra of observables and their representations, and can be used to study the properties of an internal observer superalgebra.

The cohomology of Lie algebras is a tool used to study the algebraic structure of Lie algebras and their representations. In the context of quantum field theory, the cohomology of $L_{\infty}$ can be used to study the internal observer superalgebra, which represents the observables that can be measured by an internal observer.

In summary, the infinite Lie algebra of observables and its cohomology are mathematical tools that can be used to study the algebra of observables and their representations in quantum field theory, including the concept of an internal observer superalgebra. The cohomology of $L_{\infty}$ with a representation provides information about the algebraic structure of the observables and can be used to study the properties of an internal observer superalgebra.

\section{Discussion}
In this work we have studied a concept of observers in a context of quantum mechanics and relativity theory introducing in particular new concepts of external and internal observers. Inspired by the achievements of algebraic topology we proposed to engage simplical complexes and sheaf cohomology theory to reconstruct overall structure of the internal space. We formulated the consistency of the internal observer as a principle of self-consistency, which ensures the consistency of information along closed loops over the internal state space of the observer. We constructed a hierarchy of external observers linked to the boundary of the system, where measurements and interactions can occur. The hierarchy covers the internal observer in a limit. It is proved that higher-level observers are equally or more non-local than lower-level observers. 
While these new concepts will require further development of formal representation, we have demonstrated their applicability to various fields, including quantum mechanics, relativity theory, algebraic quantum field theory, and loop quantum gravity.

We believe that observability is a fundamental aspect of physics that requires further studies to address current challenges, especially on the edge of quantum and relativity theories.
We anticipate that the concepts presented in this paper will have a significant impact on the future development of quantum gravity and the foundations of physics. Moreover, they may also be relevant to studies on the role of observability in artificial intelligence and models of consciousness.

\section{Acknowledgments}
Acknowledgments to Eliahu Cohen, Marek Czachor, David Chester, Mike Ebstyne, P. Horodecki, Masoud Mohseni, Daniel Sheehan and Xerxes Arsiwalla for useful discussions and comments.

\section{Appendix}\label{Appendix}

\subsection{Quantum Principle of Relativity and the Internal Observer}\label{Appendix1}

In this section we elaborate on the concept of quantum principle of relativity proposed in \cite{Dragan} and how some of its findings can be interpreted in the context of the internal observer. It is argued \cite{Dragan} that quantum randomness appearing in quantum theory, which respects relativity and makes instantaneous signaling impossible, is due to the full mathematical structure of the Lorentz transformation including the superluminal part. If one retains the superluminal term, then the key observation is that a particle moving along a single path must be abandoned and replaced by a propagation along many paths resembling the quantum mechanical approach. This implies the emergence of non-deterministic dynamics, complex probability amplitudes, and multiple trajectories.

The authors prove that relativistic invariance and symmetry requirements lead to the characterization of the
probability-like quantities that are based on a sum of complex exponential functions that we call probability
amplitudes. For the inertial frame $\{x',t'\}$ moving with the velocity $V$ relative to the frame $\{x,t\}$, there are two categories of inertial observers considered: the subluminal and superluminal. It is concluded that no relativistic, local and deterministic description of the emission of
a superluminal particle is possible in any inertial frame. If such an emission was to take place, it would have to
appear completely random to any inertial observer. The superluminal infinitely fast observer leads to apparently non-local reality representation with multiple potential evolution paths.  

Our approach, as described in this paper, is that although there is a whole hierarchy of observers, there exist two key families of observers that can be applied to this discussion with reference frames:

1. \textbf{External observers} that can be associated with subluminal reference frames $V<c$:

\begin{equation*}
    x'=\frac{x-Vt}{\sqrt{1-V^2/c^2}} 
    \end{equation*}
    \begin{equation}   
    t'=\frac{t-Vx/c^2}{\sqrt{1-V^2/c^2}}
\end{equation}

2. \textbf{Internal observers} that can be associated with superluminal reference frames $V>c$:

\begin{equation*}
    x'=\pm\frac{V}{|V|}\frac{x-Vt}{\sqrt{1-V^2/c^2}} 
    \end{equation*}
    
    \begin{equation}   
    t'=\pm\frac{V}{|V|}\frac{t-Vx/c^2}{\sqrt{V^2/c^2-1}}
\end{equation}
Interestingly for $V\rightarrow +\infty$, one gets  $\{x',t'\}=\{ct,x/c\}$, 
thus primed and non-primed frames are formally identical. This infinitely fast moving frame can be associated with the internal observer, for which local reality, local space-time and causality does not make sense. And decaying process discussed in \cite{Dragan} cannot be described by local deterministic theory. 

At this stage, following also results of \cite{Dragan}, both solutions preserve the constancy of the light speed. 

\subsection{The Consistency Condition for CTCs, P-CTCs and Entangled  Histories}\label{Appendix2}

We review in this section the consistency condition for Deutsch' model of CTCs (Closed Time-like Curves)\cite{Deutsch}, P-CTCs (Probabilistic-CTCs) \cite{Lloyd} and for the entangled histories \cite{Cot1, Cot2, NowakowskiAIP} which are an entangled variant of the quantum consistent histories \cite{Griffiths}. Readers interested in details of the aforementioned models are invited to reach out for the source literature. These consistency conditions can be viewed as a lower-level version of the self-consistency condition for the internal observer and are also an inspiration for the proposed condition of an internal observer.

The Deutsch model \cite{Deutsch} of a closed time-like curve (CTC) is a theoretical framework that explores the possibility of time travel and its potential implications on causality and consistency in quantum mechanics. 

In the Deutsch model, a CTC is a closed path in spacetime that connects an event in the past with another event in the future. The model assumes that information can be sent back in time along the CTC, allowing for the possibility of paradoxes and violations of causality.

To avoid such paradoxes, the Deutsch model introduces a consistency condition: 
\begin{equation}
\rho_{CTC}=Tr_{sys}[U(|\Psi\rangle\langle\Psi|\otimes \rho_{CTC})U^{\dag}]    
\end{equation}

This condition requires that any computation that is performed on a quantum state traveling along a CTC must be consistent with the laws of quantum mechanics. In other words, the output $\rho_{CTC}$ of any computation (modelled as an unitary operation acting on $\rho_{CTC}$ with some ancillary system in state $|\Psi\rangle$ ) cannot be in contradiction with the input, and the overall evolution of the quantum state must be consistent with the principles of unitarity.

The consistency condition can be understood as a requirement that any computation performed on a quantum state traveling along a CTC must be self-consistent and free of contradictions. This is necessary to avoid situations in which an observer could obtain information from the future that contradicts their own past experience or knowledge.

The consistency condition is a fundamental requirement in the Deutsch model of CTCs, and it is essential for ensuring that time travel does not lead to paradoxes or violations of causality in the quantum realm.

The consistency condition for a P-CTC (probabilistic closed timelike curve) was proposed by Lloyd \cite{Lloyd} as a modification of the original Deutsch model of CTCs.

In the P-CTC model, a quantum particle can travel back in time along a closed loop in spacetime, but the information it carries is probabilistic rather than deterministic. This means that there is no guarantee that the particle's past and future states will be related in a consistent manner, and the probability of obtaining a given outcome may depend on the observer's choice of measurement.

To ensure that the P-CTC model is self-consistent, Lloyd et al. introduced a modified version of the consistency condition by Deutsch:
\begin{equation}
    \mathcal{N}[\rho]=\frac{C\rho C^{\dag}}{Tr[C\rho C^{\dag}]}
\end{equation}
where $C\equiv Tr_{CTC}[U]$.
This condition requires that any measurement or computation performed on a quantum state traveling along a P-CTC must be statistically consistent with the laws of quantum mechanics. In other words, the probabilities of obtaining different outcomes must be consistent with the principles of unitarity, and the overall evolution of the quantum state must be consistent with the laws of causality.

The key difference between the consistency condition for a P-CTC and a traditional CTC is that the former allows for probabilistic outcomes rather than deterministic ones. This reflects the fact that the particle traveling along the P-CTC can potentially interact with other particles in the past or future, leading to non-deterministic behavior.
As proposed by Lloyed et al., one can consider a measurement that can be made either on the state of the system as it enters the CTC,
or on the state as it emerges from the CTC. Deutsch
demands that these two measurements yield the same
statistics for the CTC state alone: that is, the density
matrix of the system as it enters the CTC is the same as
the density matrix of the system as it exits the CTC. By
contrast, the P-CTC model demands that these two measurements yield the same statistics for the CTC state together with its
correlations with any chronology preserving variables.

The consistency condition for a P-CTC is important for ensuring that the model remains self-consistent and free of paradoxes or violations of causality. By requiring that any measurements or computations performed on a quantum state traveling along a P-CTC must be statistically consistent with the laws of quantum mechanics, the condition helps to ensure that the model is physically meaningful and does not lead to contradictions or inconsistencies.

The consistent histories interpretation of quantum mechanics is an approach to understanding the behavior of quantum systems based on the idea that a complete description of a system requires specifying its entire past and future. In this interpretation, a "history" of a quantum system is a complete specification of its state at different times, and a consistent set of histories is one that is free of contradictions and inconsistencies.

In the context of entangled histories, a consistent set of histories is one that can be constructed for a system in which particles are entangled over time. The consistency condition for consistent entangled histories requires that the histories of the individual particles be consistent with each other, even when they are separated in space and time.

The predecessor of the entangled histories is the decoherent histories approach built on the grounds of the well-known Feynman’s path integral theory for calculation of probability amplitudes of quantum processes. The entangled histories formalism extends the concepts of the consistent histories theory by allowing for complex superposition of histories. A history state is understood as an element in $\text{Proj}(\mathcal{H})$, spanned by projection operators from $\mathcal{H}$ to $\mathcal{H}$, where  $\mathcal{H}=\mathcal{H}_{t_{n}}\odot...\odot\mathcal{H}_{t_{1}}$.
The $\odot$ symbol, which we use to comply with the current literature, stands for sequential tensor products, and has the same meaning as the above $\otimes$ symbol.
The alternatives at a given instance of time form an exhaustive orthogonal set of projectors  $\sum_{\alpha_{x}}P_{x}^{\alpha_{x}}=\mathbb{I}$ and for the sample space of entangled histories $|H^{\overline{\alpha}})=P_{n}^{\alpha_{n}}\odot P_{n-1}^{\alpha_{n-1}}\odot\ldots\odot P_{1}^{\alpha_{1}}\odot P_{0}^{\alpha_{0}}$ ($\overline{\alpha}=(\alpha_{n}, \alpha_{n-1},\ldots, \alpha_{0})$), there exists a set of $c_{\overline{\alpha}} \in \mathbb{C}$ such that $\sum_{\overline{\alpha}}c_{\overline{\alpha}}|H^{\overline{\alpha}})=\mathbb{I}$ and $\sum_{\overline{\alpha}} |c_{\overline{\alpha}}|^2=1$. The family of consistent histories meets the consistency condition: $(H^{\overline{\alpha}}|H^{\overline{\beta}})=\delta_{\alpha\beta}$. The detailed discussion of the scalar product for the history vectors and consistency conditions can found in \cite{MTSEH}.

The consistency condition for consistent entangled histories requires that the histories of evolving systems are consistent, even when they are separated in space and time. This means that the outcomes of measurements on one particle must be compatible with the outcomes of measurements on the other particle, regardless of the separation between them.

\subsection{Quantum Sheaf Cohomology on Simplicial Complexes}\label{Appendix3}
In this appendix, we introduce the foundational concepts necessary for understanding the key results presented in this paper.

\begin{defn}
Let $K$ be an oriented simplicial complex and let $\sigma = (v_0,\ldots, v_k)$ be an oriented
k-simplex in $K$. For each $i \in (0, 1, \ldots , k)$, the i-th face of $\sigma$ is the $(k-1)$-dimensional simplex:
\begin{equation}
\sigma_{-i}=\{v_{0},\ldots,v_{i-1},v_{i+1},\ldots,v_{k}\}
\end{equation}
obtained by removing the i-th vertex.
\end{defn}

The boundary operator $\partial$ maps a k-simplex to a (k-1)-simplex:

\begin{defn}
Let $\sigma$ be a k-dimensional oriented simplex. The algebraic boundary of $\sigma$ is
the linear combination:
\begin{equation}
\partial\sigma=\sum_{i=0}^{k}(-1)^i\sigma_{-i}
\end{equation}
    
\end{defn}
The co-boundary operator $\delta$ is a dual operation to the boundary and maps a (k-1)-simplex to a k-simplex.

\begin{defn}
Let $K$ be a simplicial complex, a filtration $\bold{F}$ of $K$ (of length $n$) is a nested
sequence of sub-complexes of the form:
\begin{equation}
    \bold{F}_{1}K \subset \bold{F}_{2}K \subset \ldots \subset \bold{F}_{n-1}K \subset \bold{F}_{n}K =K
\end{equation}
\end{defn}

It is also worth defining the inclusion simplicial map by $g_i:\mathbf{F}_{i} K \rightarrow \mathbf{F}_{i+1} K.$

It is interesting also to recall representation of manifolds in terms of simplicial complexes: 

\begin{defn}
A manifold $\mathcal{M}$ is a simplicial complex $\triangle$ whose geometric realization $|\mathcal{M}|$ is a compact and
connected $n$-dimensional manifold.    
\end{defn}
Below we give a definition of a graph:

\begin{defn}
A graph G is a pair $(V,E)$ consisting of a set of vertices $V=\{v_i\}$ and a set of edges $E=\{e_{ij}\}$. 
\end{defn}

One can construct also a concept of nested fractal structures.
To construct a nested fractal graph structure, we start with a graph $\mathcal{G}$ and associate each vertex $v$ with its own graph $\mathcal{G}v$. This process can be iterated infinitely for each vertex of $\mathcal{G}v$, generating a sequence of graphs ${\mathcal{G}v, \mathcal{G}{v_1}, \mathcal{G}{v_2},...}$ where each successive graph is a scaled-down version of the previous one. Specifically, each graph $\mathcal{G}v$ is a fractal graph with self-similarity properties, such that the same graph can be found at different scales within $\mathcal{G}\infty$. The union of all these graphs gives rise to a nested fractal graph structure $\mathcal{G}\infty$.

Graphs are formally special examples of more general concept of simplical complexes. Thus, we give below more general definition of sheafs on simplical complexes and their cohomology group that can be generalized for representation of the internal observables superalgebra. \textbf{As stated in the paper, we associate with each level of complexity of the simplical complex an algebra of observables}.

\begin{defn}
A sheaf $\mathcal{F}$ on a graph $G$ consists of a vector space $\mathcal{F}(v)$ for each vertex $v\in G$, a vector space $\mathcal{F}(e)$ for each edge $e\in G$, and a linear transformation $\mathcal{F}_{v\leftharpoonup e}:\mathcal{F}(v)\longrightarrow \mathcal{F}(e)$ for each incident vertex-edge pair $\{v, e\}$. 
\end{defn}

Let $K$ be a simplicial complex and $\mathbb{F}$ a field. We write $(K,\leq)$ to denote the poset of simplices in $K$ ordered by the face relation.
\begin{defn}
   A sheaf over $K$ is a functor $\mathcal{F}:(K,\leq)\rightarrow Vect_{\mathbb{F}}$. $\mathcal{F}$ assigns:\\
   (1) to each simplex $\sigma$ of K a vector space $\mathcal{F}(\sigma)$ called the stalk;\\
   (2) for each $\sigma \leq \sigma'$ in K, a linear map $\mathcal{F}:(\sigma\leq\sigma'):\mathcal{F}(\sigma)\rightarrow \mathcal{F} (\sigma')$ called the restriction map meeting the following axioms:\\
   (*) for every simplex $\sigma\in K$ the map $\mathcal{F}(\sigma\leq \sigma)$ is the identity map;\\
   (**) for every triple $\sigma\leq \sigma' \leq \sigma''$ in K, $\mathcal{F}(\sigma\leq \sigma'')=\mathcal{F}(\sigma'\leq \sigma'')\circ \mathcal{F}(\sigma\leq \sigma')$ .
\end{defn}

We also assume \textit{partial ordering} of the topological structure that can be associated with an internal observer. Partial ordering means that for a set a binary relation of sequencing or inclusion is introduced. Whenever a local interaction happens, the information about the interaction is also 'stored' within the internal observer. The non-local correlations (in a broad spatio-temporal sense) are more complex structures but still are stored within an internal observer. Thus, intuitively one can associate with these type of interactions and correlations a poset (topologically partially ordered set).\textit{However, it is not an objective of this paper to digress on subtleties of the algebraic topology since the concepts related to an internal observer and its internal super-algebra might require further development of known mathematical structures.} 

The below figure represents a simple structure: 

\begin{defn}
Let $\mathcal{F}$ be a sheaf on a graph $\mathcal{G}$ and $\mathcal{V}$ be a subset of vertices of $\mathcal{G}$. Then a section $\mathcal{S}$ of $\mathcal{F}$ over $\mathcal{V}$ is a choice of vectors $x_{v}\in\mathcal{F}(v)$ for each vertex $v\in\mathcal{V}$ such that if two vertices $v$ and $v'$ are linked by an edge $e$, then $\mathcal{F}_{v\leftharpoonup e}(x_{v})=\mathcal{F}_{v'\leftharpoonup e}(x_{v'})$ . 
\end{defn}

The space $\Gamma(\mathcal{F},\mathcal{G})$ of sections over all vertices in $\mathcal{G}$ is called a global section of the sheaf $\mathcal{F}$ over the graph $\mathcal{G}$.

\begin{defn}
Let $\mathcal{F}$ be a sheaf on a graph $\mathcal{G}$. A zero-dimensional co-chain is a choice of vectors $x_v\in \mathcal{F}(v)$ for each vertex of $\mathcal{G}=\{v,e\}$ and a one-dimensional co-chain is a choice of vectors $x_e\in \mathcal{F}(e)$ for each edge of the graph. The space of zero-dimensional co-chains is denoted as $C^{0}(\mathcal{G}, \mathcal{F})$ and the space of one-dimensional co-chains as $C^{1}(\mathcal{G}, \mathcal{F})$.
\end{defn}

As a natural consequence of this definition, the space of zero-dimensional cochains can be represented as a direct sum of stalks of the sheaf over all vertices of the graph:
\begin{equation}
    C^{0}(\mathcal{G}, \mathcal{F})=\bigoplus_{v \in V}\mathcal{F}(v),
\end{equation}

and the space of one-dimensional cochains as a direct sum of stalks over the edges of the graph:

\begin{equation}
    C^{1}(\mathcal{G}, \mathcal{F})=\bigoplus_{e \in E}\mathcal{F}(e),
\end{equation}

These spaces can be generalized up to $n$-dimensional cochains $C^n(\mathcal{G}, \mathcal{F})$. It is fundamental to note that the space $C^{0}(\mathcal{G}, \mathcal{F})$ is different from the space $\Gamma(\mathcal{F},\mathcal{G})$ of global sections for which the consistency condition is imposed by the linear transformations $\mathcal{F}_{v\leftharpoonup e}$. The space of global sections is also the kernel of the coboundary map $\delta$ from the space of $0$-cochains to the space of $1$-cochains (the boundary map $\partial$ is defined as a dual operation to the co-boundary map).
The following sequence forms also a cochain complex by means of the coboundary map. 
\begin{equation}
    \varnothing\rightarrow C^0(\mathcal{G}, \mathcal{F})\xrightarrow{\delta^0} C^1(\mathcal{G}, \mathcal{F})\xrightarrow{\delta^1}\cdots\xrightarrow{\delta^{n-1}} C^n(\mathcal{G}, \mathcal{F})\xrightarrow{\delta^n}\cdots
\end{equation}

\begin{defn}
For each $k\geq 0$, the k-th cohomology group of $\mathcal{G}$ with coefficients in sheaf $\mathcal{F}$ is the quotient vector space:
\begin{equation}
    \mathbf{H}^k(\mathcal{G},\mathcal{F})=\frac{\ker\delta^k}{\img\delta^{k-1}}
\end{equation}
where $\img\delta^{k-1} \subseteq \ker\delta^k$ and $\mathbf{H}^0(\mathcal{G},\mathcal{F})=\ker\delta^0$.
\end{defn}

In general case of simplical complexes $K$, we can also define \textbf{a sheaf cohomology of a simplical complex} as the total direct sum:
\begin{equation}
    \mathbf{H}(K,\mathcal{F})=\bigoplus_k\mathbf{H}^k(K,\mathcal{F})
\end{equation}

\subsection{Non-locality of the hierarchy of observers}\label{Appendix4}

The subject of non-locality is naturally related to quantum entanglement. Quantum non-locality refers to the fact that measurements made on one particle can have an instant and unpredictable effect on the state of another particle, regardless of the distance between them. This effect violates the principle of local realism \cite{LR1, LR2, LR3, LR4}, which states that the properties of an object are determined by its local environment and not by any distant influence.
The link between quantum entanglement and quantum non-locality is that entangled particles exhibit non-local behavior. This means that when two particles are entangled, measurements made on one particle can instantaneously affect the state of the other particle, regardless of the distance between them. This non-local behavior is a consequence of the entanglement between the particles and is a key feature of quantum mechanics.
Before making any statements about the non-locality of the hierarchy of observers in our representation, we need to choose an appropriate measure for quantifying non-locality. One such measure is the entanglement measure $E(\cdot)$, which can be used to quantify the degree of non-locality in a quantum system \cite{EM1, EM2, EM3}

We present below a proof of the corrolary presented in this paper about non-locality of higher-order observers, which is based on the asumption that we engage a k-simplex $\triangle_k$ and associate algebras of quantum observables with the nodes, and apply an entanglement measure functional which meets necessary conditions to be qualified as an enganglement measure \cite{EM1}. The proof is conducted by induction on the simplex:

\begin{cor}
The observers $\mathcal{O}_{ext}^{n+1}$ at the (n+1)-th level of the hierarchy ($\triangle_{n+1}$) are always more or equally non-local than the observers $\mathcal{O}_{ext}^{n}$ at the n-th level of the hierarchy ($\triangle_{n}$).
\end{cor}

\begin{proof}
Let us assume that the nodes of the simplexes are associated with quantum observables and non-locality is measured by means of the entanglement measure $E(\cdot)$ which meets the condition of monotonicity under action of any LOCC operation $\Lambda$ (LOCC stands for local operations and classical communication), i.e. $E(\triangle_{n})\geq E(\Lambda(\triangle_{n}))$. Thus, this measure does not increase under action of a LOCC. Assume that an observer $\mathcal{O}_{ext}^{n}$ at the n-th level of the hierarchy ($\triangle_{n}$) is non-local which means that $E(\triangle_{n})>0$. The proof can be conducted further by induction. It is clear that for $n=1$ where we get 1-simplex, one can consider non-local entangled quantum system (e.g. in a maximally entangled state).   Any simplex $\triangle_{n+1}$ is generated from a simplex $\triangle_{n}$ by connecting a new node to all original nodes in $\triangle_{n}$. Formally, we can reverse this operation and following quantum version, one can get from a system $\triangle_{n+1}$ to $\triangle_{n}$ tracing out one of its nodes, i.e. applying operation $Tr_{v_i}$ on a chosen $v_{i}\in\triangle_{n+1}$. Partial trace $Tr_{v_i}$ is a local operation thus it does not increase $E(\cdot)$ due to its monotonicity and in result: 
\begin{equation}
    E(\triangle_{n+1})\geq E(\triangle_{n}=Tr_{v_i}\triangle_{n+1})
\end{equation}

We can therefore conclude that the $(n+1)$th level of the hierarchy is equally or more non-local than the $n$th level.
In a trivial case one can consider a fully separable system without any non-locality at all levels of the hierarchy. 
\end{proof}

This reasoning about hierarchies built on simplical complexes lead naturally to the hierarchy of entanglement aforementioned in this paper:
\begin{equation}
    E(\triangle_n)\rightarrow E(\triangle_{n-1})\rightarrow\dots\rightarrow E(\triangle_1)
\end{equation}
where $\triangle_{n-1}=\partial\triangle_{n}$ and in general, $E(\triangle_n) \geq E(\triangle_{n-1})\geq\dots\geq E(\triangle_1)$.

\end{document}